\newtheorem{theorem}{Theorem}
\newtheorem{definition}{Definition}
\newtheorem{corollary}[theorem]{Corollary}
\newtheorem{claim}{Claim}[theorem]
\newcommand{\p}{\varphi}
\newcommand{\role}{\ensuremath{r}\xspace}
\newcommand{\eset}{\emptyset}
\newcommand{\Int}{\ensuremath{\mathcal{I}}\xspace}
\newcommand{\dnot}{\ensuremath{\dot{\lnot}}\xspace}
\newcommand{\valuation}{\ensuremath{\nu}\xspace}
\newcommand{\propmodel}{\ensuremath{\mathcal{M}^{\sf P}}\xspace}
\newcommand{\propdomain}{\ensuremath{\mathcal{W}}\xspace}
\newcommand{\propneigh}{\ensuremath{\mathcal{N}}\xspace}
\newcommand{\propassign}{\ensuremath{\mathcal{V}}\xspace}
\newcommand{\con}{\ensuremath{\mathsf{con}}\xspace}
\newcommand{\conneg}{\ensuremath{\mathsf{con}_{\dot{\lnot}}}\xspace}
\newcommand{\for}{\ensuremath{\mathsf{for}}\xspace}
\newcommand{\forneg}{\ensuremath{\mathsf{for}_{\dot{\lnot}}}\xspace}
\newcommand{\rol}{\ensuremath{\mathsf{rol}}\xspace}
\newcommand{\fg}{\ensuremath{\mathsf{Fg}}\xspace}
\newcommand{\D}{\Diamond}
\newcommand{\B}{\Box}
\newcommand{\prop}[1]{\ensuremath{#1_{\sf prop} }\xspace} 
\newcommand{\elaxiom}{\ensuremath{\pi}\xspace} 
\newcommand{\consistent}[1]{\ensuremath{#1\text{-consistent}}\xspace}
\newcommand{\ALC}{\ensuremath{\smash{\mathcal{ALC}}}\xspace}
\newcommand{\NC}{\ensuremath{{\sf N_C}}\xspace}
\newcommand{\NR}{\ensuremath{{\sf N_R}}\xspace}
\newcommand{\NPr}{\ensuremath{{\sf N_P}}\xspace}
\newcommand{\NV}{\ensuremath{{\sf N_{V}}}\xspace}
\newcommand{\MLALC}[1]{\ensuremath{\smash{\mathit{ML}^{#1}_{\mathcal{ALC}}}}\xspace}
\newcommand{\MLnALC}{\ensuremath{\smash{\mathit{ML}^{n}_{\mathcal{ALC}}}}\xspace}
\newcommand{\MLnALCg}{\ensuremath{\smash{\mathit{ML}^{n\mid{\sf g}}_{\mathcal{ALC}}}}\xspace}
\renewcommand{\C}{\ensuremath{\smash{\mathbf{C}}}\xspace}
\newcommand{\LnALC}{\ensuremath{\mathbf{L}^{n}_{\ALC}}\xspace}
\newcommand{\LnALCg}{\ensuremath{\mathbf{L}^{n\mid{\sf g}}_{\ALC}}\xspace}
\newcommand{\EALCg}{\ensuremath{\smash{\mathbf{E}^{n\mid{\sf g}}_{\mathcal{ALC}}}}\xspace}
\newcommand{\CALCg}{\ensuremath{\smash{\mathbf{C}^{n\mid{\sf g}}_{\mathcal{ALC}}}}\xspace}
\newcommand{\NALCg}{\ensuremath{\smash{\mathbf{N}^{n\mid{\sf g}}_{\mathcal{ALC}}}}\xspace}
\newcommand{\MALCg}{\ensuremath{\smash{\mathbf{M}^{n\mid{\sf g}}_{\mathcal{ALC}}}}\xspace}
\newcommand{\NP}{\textsc{NP}}
\newcommand{\PSpace}{\textsc{PSpace}}
\newcommand{\ExpTime}{\textsc{ExpTime}}
\newcommand{\NExpTime}{\textsc{NExpTime}}
\newcommand{\Cmc}{\ensuremath{\mathcal{C}}\xspace}
\newcommand{\Fmc}{\ensuremath{\mathcal{F}}\xspace}
\newcommand{\Imc}{\ensuremath{\mathcal{I}}\xspace}
\newcommand{\Mmc}{\ensuremath{\mathcal{M}}\xspace}
\newcommand{\Nmc}{\ensuremath{\mathcal{N}}\xspace}
\newcommand{\Vmc}{\ensuremath{\mathcal{V}}\xspace}
\newcommand{\Wmc}{\ensuremath{\mathcal{W}}\xspace}
\newcommand{\N}{\ensuremath{\mathcal{N}_{i}}\xspace}
\newcommand{\W}{\ensuremath{\mathcal{W}}\xspace}
\newcommand{\M}{\ensuremath{\mathcal{M}}\xspace}
\begin{document}

\copyrightyear{2022}
\copyrightclause{Copyright for this paper by its authors.
  Use permitted under Creative Commons License Attribution 4.0
  International (CC BY 4.0).}

\conference{ARQNL22: 4th International Workshop on Automated Reasoning in Quantified Non-Classical Logics, 11 August 2022, Haifa, Israel}

\title{Reasoning in Non-normal Modal Description Logics}

%

\author[1]{Tiziano Dalmonte}[%
orcid=0000-0002-7153-0506,
email=tiziano.dalmonte@unibz.it,
]

\author[1]{Andrea Mazzullo}[%
orcid=0000-0001-8512-1933,
email=andrea.mazzullo@unibz.it,
]
\address[1]{Free University of Bozen-Bolzano}

\author[2]{Ana Ozaki}[%
orcid=0000-0002-3889-6207,
email=ana.ozaki@uib.no,
url=https://www.uib.no/en/persons/Ana.Ozaki,
]
\address[2]{University of Bergen, Norway}


\begin{abstract}
Non-normal modal logics, interpreted on neighbourhood models which generalise the usual relational semantics, have found application in several areas, such as epistemic, deontic, and coalitional reasoning. We present here preliminary results on reasoning in a family of modal description logics obtained by combining $\ALC$ with non-normal modal operators. First, we provide a framework of terminating, correct, and complete tableau algorithms to check satisfiability of formulas in such logics with the semantics based on varying domains. We then investigate the satisfiability problems in fragments of these languages obtained by restricting the application of modal operators to formulas only, and interpreted on models with constant domains, providing tight complexity results.
\end{abstract}

\begin{keywords}
  Non-normal modal logics \sep
  Description logics \sep
  Tableau algorithms
\end{keywords}

\maketitle
%


\section{Introduction}

Contexts involving
epistemic and doxastic~\cite{Ago,Bal,Var1}, agency-based~\cite{Brown,Elg} and coalitional~\cite{Pau,Tro}, as well as deontic~\cite{AngEtAl,Gob,Wright}, reasoning capabilities populate the wide spectrum of settings where modal logics have found natural applications.
In such scenarios, modal operators can be used to represent and reason about what agents, or groups of agents, respectively know, believe, have the capability, or have the permission, to bring about.

The semantics of modal operators is usually given in terms of \emph{relational models}, based on frames consisting of a set of possible worlds equipped with suitable accessibility relations.
However, all the modal systems interpreted by means of this kind of semantics, known as \emph{normal}, validate principles that have been considered problematic or debatable for the aforementioned applications, leading to counterintuitive or unacceptable conclusions.
Among the unpleasant features discussed in the literature, one encounters for instance the problem of logical omniscience~\cite{Var1}, as well as a number of so-called paradoxes in the representation of agents' abilities~\cite{Elg} and obligations~\cite{Ross,Aqv,For}.

To avoid the unwanted consequences of the relational semantics, several \emph{non-normal} modal logics have been proposed and studied, tracing back to the seminal works by C.I. Lewis~\cite{CIL}, Lemmon~\cite{Lem}, Kripke~\cite{Kripke}, Scott~\cite{Sco}, Montague~\cite{Mon}, Segerberg~\cite{Seg}, and Chellas~\cite{Che}.
The semantics of such systems can be given in terms of \emph{neighbourhood models}, generalisations of the relational ones that were first introduced by Scott \cite{Sco} and Montague \cite{Mon}.
In this setting, a frame consists of a set of worlds, each of which is associated with a set of subsets of worlds. Since a subset of worlds can be thought as a proposition (that is true in those worlds), this means that every world in a neighbourhood model is assigned to a set of propositions, those considered necessary with respect to that world.
This semantics both generalises the relational one, and avoids the drawbacks of the latter, since the modal principles validated on relational frames that are deemed as problematic for epistemic, coalitional or deontic applications do not hold in general on neighbourhood models.

Non-normal modalities have been widely investigated as a way to extend propositional logic.
A further line of research focuses on the behaviour of modal operators interpreted on neighbourhood frames in combination with first-order logic.
In this direction, a few works have provided completeness results for first-order non-normal modal logics~\cite{Cos,CosPac}.
In addition, non-normal modal extensions of \emph{description logics}, seen as fragments of first-order logic with a good trade-off between expressive power and computational complexity, have been considered for knowledge representation applications~\cite{SeyErd09, DL19}, also in multi-agent coalitional settings~\cite{SeyJam09,SeyJam10}.

In this paper, we investigate
satisfiability of non-normal modal extensions of description logics.
In particular,
we study the logics
characterised by
the class of all neighbourhood frames ($\mathbf{E}$),
supplemented neighbourhood frames ($\mathbf{M}$), 
neighbourhood frames closed under intersection ($\mathbf{C}$),
and neighbourhood frames containing the unit ($\mathbf{N}$),
and combine them with the prototypical $\ALC$ description logic. 
We provide a framework of terminating, correct, and complete tableau algorithms to check 
satisfiability in such logics interpreted
in neighbourhood models
with \emph{varying domains}
(in this kind of semantics, the domains of the interpretations
at each world
can differ;
cf.~Section~\ref{sec:prelim} for details).
We then investigate the satisfiability problems in fragments of these languages obtained by restricting the application of modal operators to formulas only, and provide complexity upper bounds with \emph{constant domains}
(in this case 
the domains of the interpretations
at every world
are the same).
We leave satisfiability checking procedures for non-restricted languages interpreted on models with constant domain as open problems.

\section{Preliminaries}
\label{sec:prelim}

In this section, we provide preliminary definitions for non-normal modal description logics, first introducing their syntax, and then giving their semantics based on neighbourhood models.


\paragraph{Syntax} Let \NC and \NR be countably infinite and pairwise disjoint 
sets of \emph{concept names} and \emph{role names}
respectively.
An $\MLALC{n}$ \emph{concept} is an expression of the form 
\[
C ::= A \mid \lnot C \mid C \sqcap C \mid \exists \role.C \mid \B_{i} C,
\]
where $A \in \NC$, $\role \in \NR$, and $\B_{i}$, with
$i \in I = \{ 1, \ldots, n \}$,
are modal operators called \emph{boxes}.  
%
A \emph{concept inclusion} (\emph{CI}) is an expression of
the form $C \sqsubseteq D$,
where $C, D$ are $\MLnALC{}$ concepts.
%
An \emph{\MLALC{n} formula}
takes the form
\[
\p ::= C \sqsubseteq D \mid \neg \p \mid \p \land \p \mid \B_{i} \p,
\]
where
$i \in I$.
%
We will use the following standard definitions for concepts:
$\bot := A \sqcap \lnot A$,
$\top :=  \lnot \bot$;
$\forall \role.C :=  \lnot \exists \role.\lnot C$;
$(C \sqcup D) :=  \lnot(\lnot C \sqcap \lnot D)$;
$\D_{i} C := \lnot \B_{i} \lnot C$ (operators $\D_{i}$ are called \emph{diamonds}).
Concepts of the form $\B_{i} C$, $\D_{i} C$ are called \emph{modalised concepts}.
Analogous conventions also hold for formulas,
for which we set $\mathsf{true} := (\bot \sqsubseteq \top)$.
%


\paragraph{Semantics}
 A \emph{neighbourhood frame}, or simply \emph{frame},
is a pair
$\Fmc = ( \Wmc, \{\Nmc_i \}_{i \in I})$,
where 
$\Wmc$ is a non-empty set
of \emph{worlds}
and,
for each
$i \in I = \{1, \ldots, n\}$,
$\Nmc_{i} \colon \W \rightarrow 2^{2^{\Wmc}}$ is called a \emph{neighbourhood function}.
%
A frame is:
\emph{supplemented} if,
for all
$i \in I$,
$w\in \Wmc$, 
$\alpha,\beta\subseteq \Wmc$, $\alpha\in \N(w)$ and $\alpha\subseteq\beta$ implies $\beta\in \N(w)$;
\emph{closed under intersection} if,
for all
$i \in I$,
$w\in \Wmc$, $\alpha,\beta\subseteq \Wmc$, 
$\alpha\in \N(w)$ and $\beta\in \N(w)$ implies $\alpha\cap\beta\in \N(w)$;
and
\emph{contains the unit} if,
for all
$i \in I$,
$w\in \Wmc, \Wmc \in \N(w)$.
%
An \emph{\MLALC{n} varying domain neighbourhood model}, or simply \emph{model}, based on a neighbourhood frame $\Fmc$ is a pair
$\Mmc = (\Fmc, \Int)$,
where
$\Fmc = (\Wmc,  \{\Nmc_i \}_{i \in I})$ is a neighbourhood frame
and $\Imc$ is a function associating with every $w \in \Wmc$ an \emph{$\ALC$ interpretation}
$\Imc_{w} = (\Delta_{w}, \cdot^{\Imc_{w}})$,
with non-empty \emph{domain} $\Delta_{w}$,
and where $\cdot^{\Imc_{w}}$ is a function such that:
for all $A \in \NC$, $A^{\Imc_{w}} \subseteq \Delta_{w}$;
for all $\role \in \NR$, $\role^{\Imc_{w}} \subseteq \Delta_{w} {\times} \Delta_{w}$.
An \MLALC{n} \emph{constant domain neighbourhood model}
is defined in the same way, except that, for all $w,w'\in\Wmc$,
we have that $\Delta_{w}=\Delta_{w'}$.
%
%
%
Given a model $\Mmc = (\Fmc, \Int)$ and a world $w \in \Wmc$ of $\Fmc$ (or simply \emph{$w$ in $\Fmc$}), the \emph{interpretation $C^{\Imc_{w}}$ of a concept $C$ in $w$}
is defined
as:
	\begin{align*}
		(\neg D)^{\Imc_{w}} &= \Delta_{w} \setminus D^{\Imc_{w}}, \\ 
		(D \sqcap E)^{\Imc_{w}} &= D^{\Imc_{w}} \cap E^{\Imc_{w}}, \\
		(\exists r.D)^{\Imc_{w}} &= \{d \in \Delta_{w} \mid \exists  e \in D^{\Imc_{w}}: (d,e) \in r^{\Imc_{w}}\},\\
		(\B_{i} D)^{\Imc_{w}} &= \{ d \in \Delta_{w} \mid \llbracket D \rrbracket^{\Mmc}_{d} \in \N(w) \},
	\end{align*}
where, for all 
$d \in \bigcup_{w \in \Wmc} \Delta_{w}$, the set
$\llbracket D \rrbracket^{\Mmc}_{d} = \{ v \in \Wmc \mid  d \in D^{\Imc_{v}} \}$
is called the \emph{truth set of $D$ with respect to $d$}.
We say that a concept $C$ is \emph{satisfied in $\Mmc$} if there is $w$ in $\Fmc$ such that $C^{\Imc_{w}} \neq \eset$, and that $C$ is \emph{satisfiable} (over varying or constant neighbourhood models, respectively) if there is a (varying or constant domain, respectively) neighbourhood model in which it is satisfied.
The \emph{satisfaction of an $\MLALC{n}$ formula~$\p$ in $w$ of $\Mmc$}, written $\Mmc, w  \models \p$, is defined
as follows:
\begin{align*}
	\Mmc, w \models C\sqsubseteq D &\text{\quad iff \quad} C^{\Imc_{w}} \subseteq D^{\Imc_{w}}, \\
	\Mmc, w  \models \neg \psi &\text{\quad iff \quad} \Mmc, w  \not \models \psi, \\
	\Mmc, w  \models \psi \land \chi &\text{\quad iff \quad} \Mmc, w  \models \psi \text{ and } \Mmc, w  \models \chi, \\
	\Mmc, w  \models \B_{i} \psi &\text{\quad iff \quad} \llbracket \psi \rrbracket^{\Mmc} \in \N(w),
 \end{align*} 
%
where
$\llbracket \psi \rrbracket^{\Mmc} = \{ v \in \Wmc \mid \Mmc, v \models \psi \}$ is the \emph{truth set of $\psi$}.
As a consequence of the above definition, we obtain the following
condition for diamond formulas:
$\M, w \models \Diamond_{i} \psi$  iff  $\llbracket \neg \psi \rrbracket^{\M} \notin \N(w)$.
Given a
neighbourhood
frame $\Fmc = (\Wmc,  \{\Nmc_i \}_{i \in I})$
and a
neighbourhood
model $\Mmc = (\Fmc, \Imc)$,
we say that $\varphi$ is \emph{satisfied in $\Mmc$} if there is $w \in \Wmc$ such that
$\Mmc, w \models \varphi$,
and that $\p$ is \emph{satisfiable} (over varying or constant domain neighbourhood models, respectively) if it is satisfied in some (varying or constant domain, respectively) neighbourhood model.
Given a class of frames $\Cmc$, by the \emph{$\MLALC{n}$ formula satisfiability problem  on} (\emph{varying} or \emph{constant domain}, respectively) \emph{neighbourhood models based on a frame in $\Cmc$} we mean the problem of deciding whether an $\MLALC{n}$ formula is satisfied in a (varying or constant domain, respectively) neighbourhood model based on a frame in $\Cmc$.
In the following, let $\mathsf{Log} = \{ \mathbf{E}, \mathbf{M}, \mathbf{C}, \mathbf{N} 
\}$. 
Given $\mathbf{L} \in \mathsf{Log}$, the \emph{$\LnALC$ formula satisfiability problem on} (\emph{varying} or \emph{constant domain}, respectively) \emph{neighbourhood models} is the $\MLALC{n}$ formula satisfiability problem on (varying or constant domain, respectively) neighbourhood models based on a frame in the class of:
\begin{itemize}
	\item  all neighbourhood frames, for $\mathbf{L} = \mathbf{E}$;
	\item  supplemented neighbourhood frames, for $\mathbf{L} = \mathbf{M}$;
	\item  neighbourhood frames closed under intersection, for $\mathbf{L} = \mathbf{C}$; and
	\item neighbourhood frames containing the unit, for $\mathbf{L} = \mathbf{N}$.
\end{itemize}


%


\section{Tableaux for Non-normal Modal Description Logics}

In this section, we provide terminating, sound and complete tableau algorithms to check satisfiability of formulas in varying domain neighbourhood models. The notation partly adheres to that of~\citeauthor{GabEtAl03}~\cite{GabEtAl03}, while the model construction in the soundness proof is based on the strategy of~\citeauthor{DalHyp}~\cite{DalHyp}.


We require the following preliminary notions.
For a concept or formula $\gamma$, we denote by $\dot{\lnot}\gamma$ the negation of $\gamma$ put in \emph{negation normal form} (\emph{NNF}), defined as usual.
Given an $\MLALC{n}$ formula $\p$, we assume without loss of generality that $\p$ is in NNF, it contains CIs only of the form $\top \sqsubseteq C$, and every concept occurring in $\p$ is also in NNF.
We define the \emph{weight} $|C|$ of a concept $C$ in NNF as follows: $|A| = |\lnot A| = 0$; $|\exists r.D| = |\forall r.D| = |\Diamond_{i}D| = |\Box_{i}D| = |D| + 1$; $|D \sqcap E| = |D \sqcup E| = |D| + |E| + 1$. The \emph{weight $|\p|$} of a formula $\p$ in NNF is defined as: $| (C \sqsubseteq D) | = | \lnot (C \sqsubseteq D) | = 0$; $\Box_{i} \psi = | \psi | + 1$; $| \psi \land \chi | = | \psi \lor \chi | = | \psi | + | \chi | + 1$.
Observe that, for a concept or formula $\gamma$, we have that $| \gamma | = | \dnot \gamma |$.
We denote by $\con(\p)$ and $\for(\p)$ the set of subconcepts  and subformulas of $\p$, respectively, and then we set
$\conneg(\p) = \con(\p) \cup \{ \dot{\lnot}C \mid C \in \con(\p) \}$ and $\forneg(\p) =  \for(\p) \cup \{ \dot{\lnot}\psi \mid \psi \in \for(\p) \}$.
The set $\mathsf{rol}(\p)$ is the set of role names occurring in $\p$.
Let 
$\fg(\p) = \forneg(\p) \cup \conneg(\p) \cup \rol(\p)$.
Note that, 
by our assumption on the form of CIs in $\p$, we have $\top\in\conneg(\p)$.

		Moreover, let $\NV$ be a countable set of \emph{variables}, well-ordered by the relation $<$, and let $\mathsf{N_{L}}$ be a countable set of \emph{labels}.
		Given an $\MLnALC$ formula $\p$, an \emph{$n$-labelled constraint for $\p$} takes the form $n: \psi$, or $n: C(x)$, or $n: r(x, y)$, where $n \in \mathsf{N_{L}}$, $\psi \in \forneg(\p)$, $x \in \NV$, 
		$C \in \conneg(\p)$, 
		and $r \in \rol(\p)$.
		An \emph{$n$-labelled constraint system for $\p$} is a set $S_{n}$ of $n$-labelled constraints for $\p$.
		(A \emph{labelled constraint for $\p$} is an $n$-labelled constraint for $\p$, for some $n \in \mathsf{N_{L}}$, and similarly for a \emph{labelled constraint system for $\p$}).
		A
		\emph{completion set} $\mathbf{T}$ is a non-empty
		union
		of labelled constraint system for $\p$,
		and we set $\mathsf{L}_{\mathbf{T}} = \{ n \in \mathsf{N_{L}} \mid S_{n} \in \mathbf{T} \}$.
		
		Concerning variables, we adopt the following terminology.
		A variable $x$ \emph{occurs in $S_{n}$} if $S_{n}$ contains $n$-labelled constraints of the form $n: C(x)$ or $n: r(\tau,\tau')$, where $\tau = x$, or $\tau' = x$, and $n \in \mathsf{N_{L}}$.
		In addition, $x$ is said to be \emph{fresh for $S_{n}$} if $x$ does not occur in $S_{n}$ and $x > y$, for every $y$ that occurs in $S$.
		(These notions can be used with respect to $\mathbf{T}$, whenever $S_{n} \subseteq \mathbf{T}$).
		Without loss of generality, we assume that, whenever $x$ occurs in $S_{n}$, the $n$-labelled constraint $n: \top(x)$ is in $S_{n}$ .
		Also, if $n : r(x, y) \in S_{n}$, we call $y$ an \emph{$r$-successor of $x$} with respect to $S_{n}$.
		Finally, given variables $x, y$ in an $n$-labelled constraint system $S_{n}$, we say that $x$ is \emph{blocked by $y$ in $S_{n}$} if $x > y$ and $\{ C \mid n : C(x) \in S_{n} \} \subseteq \{ C \mid n : C(y) \in S_{n} \}$.

		A completion set $\mathbf{T}$ contains a \emph{clash} if
		$\{m: \psi, m: \neg\psi\} \subseteq \mathbf{T}$,
		or
		$\{m: C(x), m: \lnot C(x)\} \subseteq \mathbf{T}$,
		for some $m \in \mathsf{N_{L}}$, and formula $\psi$ or concept $C$.
		A completion set with no clash is \emph{clash-free}.
		Given $\mathbf{L} \in \mathsf{Log}$, a completion set is $\LnALC$-\emph{complete} if no \emph{$\LnALC$-rule} from Figure~\ref{fig:rules} is applicable to $\mathbf{T}$,
				where $\gamma_{j}$ is either $\psi_{j} \in \forneg(\p)$ or $C_{j}(x_{j})$, with $C_{j} \in \conneg(\p)$, for $j = 1, \ldots, k$, and $\delta$ is either $\chi \in \forneg(\p)$ or $D(y)$, with $D \in \conneg(\p)$,
		with respect to the following \emph{application conditions} associated to each $\LnALC$-rule:
		\begin{itemize}
			\item[$(\mathsf{R}_{\land})$] $\{n :\psi,n : \chi\} \not\subseteq \mathbf{T}$;
			\qquad \qquad \ \  $(\mathsf{R}_{\sqcap})$ \ $\{n : C(x),n : D(x)\} \not\subseteq \mathbf{T}$;
			\item[$(\mathsf{R}_{\lor})$] $\{n :\psi, n : \chi \} \cap \mathbf{T} = \emptyset$;
			\qquad \quad $(\mathsf{R}_{\sqcup})$ $\{n : C(x),n : D(x)\} \cap \mathbf{T} = \emptyset$;
			\item[$(\mathsf{R}_{\exists})$] $x$ is not blocked by any variable in $S_{n}$, there is no $z$ such that  $\{n : r(x,z), n : C(z)\} \subseteq \mathbf{T}$, and $y$ is the $<$-minimal
			variable fresh for $S_{n}$;
			\item[$(\mathsf{R}_{\forall})$] $n : C(y) \notin\mathbf{T}$;
			\item[$(\mathsf{R}_{=})$] $x$ occurs in
			an $n$-labelled constraint
			in $\mathbf{T}$
			and $n : C(x) \notin\mathbf{T}$;
			\item[$(\mathsf{R}_{\neq})$] $x$ is
			the $<$-minimal
			variable 
			fresh
			for $S_{n}$,
			and there is no $y$ such that $n : \dnot C(y) \in \mathbf{T}$;
			\item[$(\mathsf{R}_{\mathbf{L}})$]
				$m$ is fresh
				for $\mathbf{T}$, and there is no $o\in \mathsf{N_{L}}$ such that
				$\{ o: \gamma_1, \ldots, o: \gamma_k, o: \delta\}\subseteq \mathbf{T}$, or $\{ o: \dot{\lnot}\gamma_j, o: \dot{\lnot}\delta\}\subseteq \mathbf{T}$, for some $j\leq l$,
				where $k$ and $l$ are as in Figure~\ref{fig:rules}.
		\end{itemize}
		%

\begin{figure}
			\centering
			\begin{small}
				\begin{equation*}
					\begin{array}{l l}
						\hbox{\begin{forest}
								for tree={
									calign=center,
									grow'=east, 
									parent anchor=east, child anchor=west, 
								}
								[($\mathsf{R}_{\land}$) \ \ \fbox{ \begin{varwidth}{\textwidth}$n: \psi \land \chi$  \end{varwidth}}
								[ \fbox{ \begin{varwidth}{\textwidth}$n: \psi$ {,} $n:\chi$  \end{varwidth}}]
								]
						\end{forest}}
						&
						\hbox{\begin{forest}
								for tree={
									calign=center,
									grow'=east, 
									parent anchor=east, child anchor=west, 
								}
								[($\mathsf{R}_{\sqcap}$) \ \ \fbox{ \begin{varwidth}{\textwidth}$n: C \sqcap D(x)$\end{varwidth}}
								[ \fbox{ \begin{varwidth}{\textwidth}$n: C(x)$ {,}  $n: D(x)$\end{varwidth}}]
								]
						\end{forest}}
						\\
						\hbox{\begin{forest}
								for tree={
									calign=center,
									grow'=east, 
									parent anchor=east, child anchor=west, 
								}
								[($\mathsf{R}_{\lor}$) \ \ \fbox{ \begin{varwidth}{\textwidth}$n: \psi \lor \chi$ \end{varwidth}}
								[\fbox{ \begin{varwidth}{\textwidth}$n: \psi$\end{varwidth}}]
								[\fbox{ \begin{varwidth}{\textwidth}$n: \chi$\end{varwidth}}]
								]
						\end{forest}}
						&
						\hbox{\begin{forest}
								for tree={
									calign=center,
									grow'=east, 
									parent anchor=east, child anchor=west, 
								}
								[($\mathsf{R}_{\sqcup}$) \ \ \fbox{ \begin{varwidth}{\textwidth}$n: C \sqcup D(x)$ \end{varwidth}}
								[\fbox{ \begin{varwidth}{\textwidth}$n: C(x)$\end{varwidth}}]
								[\fbox{ \begin{varwidth}{\textwidth}$n: D(x)$\end{varwidth}}]
								]
						\end{forest}}
						\\
						\hbox{\begin{forest}
								for tree={
									calign=center,
									grow'=east, 
									parent anchor=east, child anchor=west, 
								}
								[($\mathsf{R}_{\exists}$) \ \ \fbox{ \begin{varwidth}{\textwidth}$n: \exists r.C(x)$ \end{varwidth}}
								[ \fbox{ \begin{varwidth}{\textwidth}{$n: r (x , y)$} {,} $n: C(y)$\end{varwidth}} ]
								]
						\end{forest}}
						&
						\hbox{\begin{forest}
								for tree={
									calign=center,
									grow'=east, 
									parent anchor=east, child anchor=west, 
								}
								[($\mathsf{R}_{\forall}$) \ \ \fbox{ \begin{varwidth}{\textwidth}$n: \forall r.C(x)$  {,}  { $n: r (x , y)$  } \end{varwidth}}
								[ \fbox{ \begin{varwidth}{\textwidth}$n: C(y)$  \end{varwidth}}]
								]
						\end{forest}}
						\\
						\hbox{\begin{forest}
								for tree={
									calign=center,
									grow'=east, 
									parent anchor=east, child anchor=west, 
								}
								[($\mathsf{R}_{=}$) \ \ \fbox{ \begin{varwidth}{\textwidth}$n: \top \sqsubseteq C$  \end{varwidth}}
								[\fbox{ \begin{varwidth}{\textwidth}$n: C(x)$  \end{varwidth}}]
								]
						\end{forest}}
						&
						\hbox{\begin{forest}
								for tree={
									calign=center,
									grow'=east, 
									parent anchor=east, child anchor=west, 
								}
								[($\mathsf{R}_{\neq}$) \ \ \fbox{ \begin{varwidth}{\textwidth}$n: \lnot (\top \sqsubseteq C)$ \end{varwidth}}
								[\fbox{ \begin{varwidth}{\textwidth}$n:\dot{\lnot}C(x)$ \end{varwidth}}]
								]
						\end{forest}}
						\\
					\end{array}
				\end{equation*}

				\begin{forest}
					for tree={
						calign=center,
						grow'=east, 
						parent anchor=east, child anchor=west, 
					}
					[($\mathsf{R}_{\mathbf{L}}$) \ \ \fbox{ \begin{varwidth}{\textwidth}$n: \Box_i\gamma_1,\ldots, n: \Box_i\gamma_k, n: \Diamond_{i}\delta$  \end{varwidth}}
					[\fbox{ \begin{minipage}{3.8cm}\centering $ m: \gamma_1, \ldots, m: \gamma_k, m: \delta$  \end{minipage}} \ $(0)$]
					[\fbox{ \begin{minipage}{3.8cm}\centering $ m: \dot{\lnot}\gamma_1, m: \dot{\lnot}\delta$  \end{minipage}} \ $(1)$]
					[\phantom{\qquad\qquad\quad\ \ \ } $\vdots$ \phantom{\qquad\qquad\qquad\quad}]
					[\fbox{ \begin{minipage}{3.8cm}\centering $ m: \dot{\lnot}\gamma_k, m: \dot{\lnot}\delta$  \end{minipage}} \ $(l)$]
					]
				\end{forest}
			\end{small}
			\caption{\label{fig:rules} $\LnALC$-rules,
				where:
				for $\mathbf{L} = \mathbf{E}$, $k = l = 1$;
				for $\mathbf{L} = \mathbf{M}$, $k = 1$ and $l = 0$;
				for $\mathbf{L} = \mathbf{C}$, $k \geq 1$ and $l = k$;
				for $\mathbf{L} = \mathbf{N}$, $k = l = 1$ or $k = l = 0$;
			}
		\end{figure}

				The $\LnALC$-rules essentially state how to extend a completion set on the basis of the information contained in it.
				Branching rules entail a non-deterministic choice in the expansion of the completion set.
				For each $\mathbf{L} \in \mathsf{Log}$, 
				we now define an algorithm based on $\LnALC$-rules for checking the $\LnALC$ formula satisfiability.
				We then prove that the algorithm terminates for every formula $\p$, and that it is sound and complete with respect to $\LnALC$ satisfiability.
				
				\begin{definition}[$\LnALC$ tableau algorithm for $\p$]
					Given an $\MLnALC$ formula $\p$, the \emph{$\LnALC$ tableau algorithm for $\p$} runs as follows:
					set the initial completion set $\mathbf{T}_{\p} =  \{0 : \p, 0 : \top(x) \}$,
					and expand 
					it by means of the $\LnALC$-rules  until a clash
					or an $\LnALC$-complete completion set is obtained. 
				\end{definition}

		In the rest of this section, we prove termination, soundness and completeness of the tableau algorithms given above.
		%
		%
		%
		%
		We start by showing that the $\LnALC$ tableau algorithm terminates.
		
		%

		\begin{theorem}[Termination]
			\label{thm:termination}
			Having started on the initial completion set $\mathbf{T}_{\p} =  \{0 : \p, 0 : \top(x) \}$, the $\LnALC$ tableau algorithm for $\p$ terminates after at most $2^{p(|\fg(\p)|)}$ steps, where $p$ is a polynomial function. 
		\end{theorem}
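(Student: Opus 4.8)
The plan is to bound the size of any completion set $\mathbf{T}$ produced by the $\LnALC$-rules by a single exponential in $m := |\fg(\p)|$, and then to note that every rule application strictly enlarges $\mathbf{T}$ along the branch on which it fires: each rule adds at least one new constraint, variable or label, and its application condition forbids firing it again on the same data. Since $\mathbf{T}$ grows monotonically along a branch and is bounded, the number of steps on a branch is bounded by the same quantity. As $\forneg(\p)$, $\conneg(\p)$ and $\rol(\p)$ are of size polynomial in $m$, it remains to bound (i) the number of variables occurring in a single labelled constraint system $S_{n}$ and (ii) the number of labels in $\mathsf{L}_{\mathbf{T}}$; the number of constraints is then at most the number of labels times $(|\forneg(\p)| + |\conneg(\p)|\cdot V + |\rol(\p)| \cdot V^{2})$, where $V$ bounds the variables per label.

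For (i) I would invoke the standard anywhere-blocking argument for $\ALC$, which is insensitive to the modal layer because the variable-generating rules $\mathsf{R}_{\sqcap}, \mathsf{R}_{\sqcup}, \mathsf{R}_{\exists}, \mathsf{R}_{\forall}, \mathsf{R}_{=}, \mathsf{R}_{\neq}$ all act inside a single label. Each variable $x$ of $S_{n}$ has a concept set $\{ C \mid n : C(x) \in S_{n} \} \subseteq \conneg(\p)$; since $x$ is blocked by any $<$-smaller variable with a superset concept set, distinct unblocked variables have distinct concept sets, so there are at most $2^{|\conneg(\p)|}$ of them. As $\mathsf{R}_{\exists}$ produces a successor only for an unblocked variable and only one per existential subconcept (at most $|\conneg(\p)|$ of these), and $\mathsf{R}_{\neq}$ adds at most $|\forneg(\p)|$ further variables, we obtain $V \le 2^{p_{1}(m)}$ for a polynomial $p_{1}$.

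For (ii) the crucial observation is that concept inclusions are interpreted locally: a constraint $n : (\top \sqsubseteq C)$ is present only where it has been explicitly introduced and never reaches a fresh label except through an explicit box $\B_{i}(\top \sqsubseteq C)$. Hence, whenever $\mathsf{R}_{\mathbf{L}}$ creates a fresh label $m$ from a premise $\{ n : \B_{i}\gamma_{1}, \ldots, n : \B_{i}\gamma_{k}, n : \D_{i}\delta \}$, every constraint ever placed in $m$ — the $\gamma_{j}, \delta$ (or their $\dnot$-forms) together with all consequences drawn by the $\ALC$-rules and $\mathsf{R}_{=}$ — has strictly smaller modal nesting depth than the constraints of $n$. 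Viewing the labels as a tree rooted at $0$ with the $\mathsf{R}_{\mathbf{L}}$-creations as edges, the tree therefore has depth at most the modal depth of $\p$, which is at most $m$. At each label the distinct premises are bounded by the modalised constraints present (for a fixed $i$ one gathers all $\B_{i}$-constraints with a single $\D_{i}$-constraint), hence by $2^{p_{1}(m)}\cdot m$, and each premise opens at most $k + 1 \le 2^{p_{1}(m)}$ branches. Multiplying this single-exponential per-level branching over the at most $m$ levels gives $|\mathsf{L}_{\mathbf{T}}| \le 2^{p_{2}(m)}$, and combining with (i) yields a completion set of size at most $2^{p(m)}$ for a suitable polynomial $p$, as required.

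I expect the main obstacle to be (ii), precisely the apparent circularity between the numbers of labels and of variables: a concept constraint $\B_{i} C(x)$ carries the variable $x$ into the freshly created label, so a naive count of premises — and thus of labels — seems to depend on the total number of variables, which itself depends on the number of labels. The decisive points that break the circle are that the per-label variable bound of step (i) is purely combinatorial in $\conneg(\p)$ and independent of how many labels exist, and that the locality of concept inclusions forces the modal depth to decrease strictly along $\mathsf{R}_{\mathbf{L}}$-edges, capping the label tree at polynomial depth; I would also verify carefully that the caching side-condition of $\mathsf{R}_{\mathbf{L}}$ (requiring that no existing label already realise the needed constraints) is what rules out generating distinct labels of identical content and keeps the per-level breadth single-exponential.
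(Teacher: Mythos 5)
Your top-level decomposition coincides with the paper's: bound the number of constraints per label (your step (i) is the paper's Claim 1.1, which simply delegates to the standard $\ALC$ blocking argument you spell out), bound the number of labels, and observe that every rule application strictly enlarges the completion set, so the number of steps is bounded by its maximal size. Where you genuinely diverge is the label count. The paper builds no tree of bounded modal depth at all: it notes that, by the application condition of $\mathsf{R}_{\mathbf{L}}$, each combination of premise constraints generates at most one label \emph{globally} (once some label realises one of the rule's outcomes, the rule is blocked for that combination at every label), and it identifies a combination with the underlying elements of $\fg(\p)$, giving at most $|\fg(\p)|^2$ labels for $\mathbf{E},\mathbf{M},\mathbf{N}$ and $2^{|\fg(\p)|}\cdot|\fg(\p)|$ for $\mathbf{C}$ outright. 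Your decreasing-modal-depth argument is a legitimate alternative way to break the labels-versus-variables circularity you rightly identify (though it needs a depth measure that passes through CIs, since the paper's weight assigns $0$ to a CI while $\mathsf{R}_{=}$ can reintroduce a deeply boxed concept from it), and it yields the required single-exponential bound for $\mathbf{E},\mathbf{M},\mathbf{N}$; it is, however, coarser than the paper's direct count, which gives only polynomially many labels in those cases.

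The one place your argument does not go through as written is $\mathbf{L}=\mathbf{C}$. The premises of $\mathsf{R}_{\mathbf{L}}$ there are arbitrary sets of box-constraints with $k\geq 1$, not the single set of \emph{all} $\B_i$-constraints at the label; restricting to ``gather all of them'' changes the algorithm. Counted naively, the per-level breadth is the number of subsets of the box-constraints present at a label, and since box-constraints carry variables this is $2^{2^{p_1(m)}}$, doubly exponential, rather than your claimed $2^{p_1(m)}\cdot m$. The device that rescues the bound is precisely the caching side-condition you flag at the end but do not exploit: a combination can fire at most once globally, and the paper counts combinations by the concepts and formulas from $\fg(\p)$ occurring in them, capping the total number of labels at $2^{|\fg(\p)|}\cdot|\fg(\p)|$ with no depth argument at all. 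So your diagnosis of where the difficulty lies is accurate; the missing step is to use the non-existence condition of $\mathsf{R}_{\mathbf{L}}$ itself as the counting mechanism for $\mathbf{C}$, instead of the tree structure.
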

		\begin{proof}
			
			We first require the following claims.
			
			\begin{claim}
				\label{cla:termlocal}
				Let $\mathbf{T}$ be a completion set obtained by applying the $\LnALC$ tableau algorithm for $\p$.
				For each $n \in \mathsf{L}_{\mathbf{T}}$,
				the number of 
				$n$-labelled constraints
				for $\p$
				in $\mathbf{T}$ does not exceed $2^{q(|\fg(\p)|)}$, where $q$ is a polynomial function. 
			\end{claim}
			\begin{proof}[Proof of Claim]
				We remark that, for each $S_n\subseteq\mathbf T$, the $\LnALC$ tableaux algorithm behaves exactly like a standard (non-modal) $\ALC$ tableaux algorithm (cf. e.g.~\cite[Theorem 15.4]{GabEtAl03}, noting also that in our case we do not have to deal with individual names).
			\end{proof}
			
			\begin{claim}
				\label{cla:termglobal}
				Let $\mathbf{T}$ be a completion set obtained by applying the $\LnALC$ tableau algorithm for $\p$.
				For
				$\mathbf{L} \in \{\mathbf{E}, \mathbf{M}, \mathbf{N}\}$,
				$|\mathsf{L}_{\mathbf{T}}| \leq |\fg(\p)|^2$.
				For
				$\mathbf{L} = \mathbf{C}$,
				$|\mathsf{L}_{\mathbf{T}}| \leq 2^{|\fg(\p)|} \cdot |\fg(\p)|$.
			\end{claim}
			\begin{proof}[Proof of Claim]
				Labels $n$ are generated in $\mathbf{T}$ by means
				of the application of the rule $\mathsf{R}_{\mathbf{L}}$.
				For $\mathbf{L} \in \{\mathbf{E}, \mathbf{M}, \mathbf{N}\}$,
				this rule is applied to two $n$-labelled contraints
				$n: \Box_i\gamma, n: \Diamond_{i}\delta$
				(for $\mathbf{L} = \mathbf{N}$ possibly also to a single constraint
				$n: \Diamond_{i}\delta$),
				while for $\mathbf{L} = \mathbf{C}$
				it is applied to $k+1$ $n$-labelled contraints
				$n: \Box_i\gamma_1, ... n: \Box_i\gamma_{k}, n: \Diamond_{i}\delta$.
				By the application condition of $\mathsf{R}_{\mathbf{L}}$,
				each such combination
				of
					constraints
				generates at most one label $m$.
				Therefore, the number of
				labels that can be generated in $\mathbf{T}$ is bounded by the number of possible
				such
				combinations,
				which is at most $|\fg(\p)|^2$, for $\mathbf{L} \in \{\mathbf{E}, \mathbf{M}, \mathbf{N}\}$,
				and at most $2^{|\fg(\p)|} \cdot |\fg(\p)|$, for $\mathbf{L} = \mathbf{C}$.
			\end{proof}
			
			The theorem is then a consequence of the following observations.
			Given a completion set $\mathbf{T}$ constructed by the $\LnALC$ tableau algorithm,
			we have by Claim~\ref{cla:termglobal} that
			the number of applications of rule
			$\mathsf{R}_{\mathbf{L}}$
			is bounded by $| \mathsf{L}_{\mathbf{T}} |$, which is at most 
			$|\fg(\p)|^2$,
			for
			$\mathbf{L} \in \{\mathbf{E}, \mathbf{M}, \mathbf{N}\}$,
			and at most
			$2^{|\fg(\p)|} \cdot |\fg(\p)|$,
			for
			$\mathbf{L} = \mathbf{C}$.
			Moreover, since every application of the rules
			$\mathsf{R}_{\land}$ and $\mathsf{R}_{\lor}$ introduces a new formula to an $n$-labelled constraint, the total number of such rule applications is bounded by $| \mathsf{L}_{\mathbf{T}} | \cdot | \fg(\p) |$.
			Finally, by Claim~\ref{cla:termlocal}, the number of applications of rules
			$\mathsf{R}_{\sqcap}, \mathsf{R}_{\sqcup}, \mathsf{R}_{\forall}, \mathsf{R}_{\exists}, \mathsf{R}_{=}, \mathsf{R}_{\neq}$ per label $n$ is bounded by $2^{q(|\fg(\p)|)}$, where $q$ is a polynomial function, since these rules add a new constraint to an $n$-labelled constraint system.
			Thus, the overall number of such rule applications is bounded by $| \mathsf{L}_{\mathbf{T}} | \cdot 2^{q(|\fg(\p)|)}$.
		\end{proof}

		We now proceed to prove that the $\LnALC$ tableau algorithm is sound.
		
		\begin{theorem}[Soundness]
			\label{thm:soundness}
			If, having started on the initial completion set $\mathbf{T}_{\p}$, the $\LnALC$ tableau algorithm constructs an $\LnALC$-complete and clash-free completion set for $\p$, then $\p$ is $\LnALC$ satisfiable.
		\end{theorem}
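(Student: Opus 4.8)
The plan is to extract a neighbourhood model $\Mmc$ directly from the $\LnALC$-complete, clash-free completion set $\mathbf{T}$ produced by the algorithm, and then to show by induction that $\Mmc$ realises every constraint of $\mathbf{T}$; applied to the initial constraint $0:\p$, this yields $\Mmc,w_{0}\mdl\p$. By Theorem~\ref{thm:termination} the set $\mathbf{T}$ is finite, so all sets below are well-defined. Concretely, I would take one world $w_{n}$ for each label $n\in\mathsf{L}_{\mathbf{T}}$, let $\Delta_{w_{n}}$ be the set of variables occurring in $S_{n}$ (non-empty by the $\top(x)$-convention), set $x\in A^{\Imc_{w_{n}}}$ iff $n:A(x)\in\mathbf{T}$, and interpret role names from the constraints $n:r(x,y)\in\mathbf{T}$, redirecting the outgoing edges of a blocked variable to its blocker exactly as in the classical $\ALC$ model construction~\cite{GabEtAl03}. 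For each world $w_{n}$ and modality $i$ I would put
\[
\Nmc_{i}(w_{n}) \;=\; \{\, \llbracket\gamma\rrbracket^{\Mmc} \mid n:\B_{i}\gamma\in\mathbf{T} \,\},
\]
where $\llbracket\gamma\rrbracket^{\Mmc}$ is the truth set of the formula $\gamma$, resp.\ the truth set $\llbracket C\rrbracket^{\Mmc}_{x}$ relative to $x$ when $\gamma$ is a concept assertion $C(x)$ — both being subsets of the finite world set, so that concept- and formula-modalities are handled uniformly. For $\mathbf{L}=\mathbf{M}$ I would take the upward closure of this family, for $\mathbf{L}=\mathbf{C}$ its closure under binary intersection, and for $\mathbf{L}=\mathbf{N}$ I would additionally throw in $\Wmc$; in each case $\Nmc_{i}$ then satisfies the defining condition of the class $\mathbf{L}$ by construction. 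Well-definedness of $\Nmc_{i}$, which refers to the model's own truth sets, is secured by stratifying the definition along the modal depth of $\gamma$, following the strategy of~\cite{DalHyp}.

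\emph{Truth lemma.} The core step is to prove, by induction, that $n:\q\in\mathbf{T}$ implies $\Mmc,w_{n}\mdl\q$ and $n:C(x)\in\mathbf{T}$ implies $x\in C^{\Imc_{w_{n}}}$. The Boolean, concept inclusion and quantifier cases (rules $\mathsf{R}_{\land},\mathsf{R}_{\lor},\mathsf{R}_{\sqcap},\mathsf{R}_{\sqcup},\mathsf{R}_{\exists},\mathsf{R}_{\forall},\mathsf{R}_{=},\mathsf{R}_{\neq}$) are exactly those of the soundness proof for ordinary $\ALC$ tableaux, using clash-freeness for the literal cases and the subset-blocking condition for the successor cases. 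For a positive box constraint $n:\B_{i}\gamma\in\mathbf{T}$ the claim is \emph{immediate} from the definition of $\Nmc_{i}(w_{n})$, since $\llbracket\gamma\rrbracket^{\Mmc}$ is placed in $\Nmc_{i}(w_{n})$ (and is retained by the closures used for $\mathbf{M},\mathbf{C},\mathbf{N}$); thus the box case costs nothing.

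\emph{The crux: diamonds.} The only genuinely new case is that of a diamond constraint $n:\D_{i}\delta\in\mathbf{T}$, for which I must check $\llbracket\dnot\delta\rrbracket^{\Mmc}\notin\Nmc_{i}(w_{n})$. Here $\LnALC$-completeness is essential: since $\mathsf{R}_{\mathbf{L}}$ is not applicable, for every relevant combination $n:\B_{i}\gamma_{1},\dots,n:\B_{i}\gamma_{k},n:\D_{i}\delta$ there is already a label $o$ with $\{o:\gamma_{1},\dots,o:\gamma_{k},o:\delta\}\sbs\mathbf{T}$ or $\{o:\dnot\gamma_{j},o:\dnot\delta\}\sbs\mathbf{T}$ for some $j$. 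Evaluating the simpler arguments $\gamma_{j},\delta$ at $w_{o}$ — via the induction hypothesis together with the now-free box case — such an $o$ lies in exactly one of $\bigcap_{j}\llbracket\gamma_{j}\rrbracket^{\Mmc}$ and $\llbracket\dnot\delta\rrbracket^{\Mmc}$, so these two sets differ. The branching shape of $\mathsf{R}_{\mathbf{L}}$ is tuned to the closure used for each logic: for $\mathbf{E}$ ($k=l=1$) this rules out $\llbracket\dnot\delta\rrbracket^{\Mmc}=\llbracket\gamma\rrbracket^{\Mmc}$; for $\mathbf{M}$ ($k=1$, $l=0$, positive branch only) it exhibits a world in $\llbracket\gamma\rrbracket^{\Mmc}\setminus\llbracket\dnot\delta\rrbracket^{\Mmc}$, so that $\llbracket\gamma\rrbracket^{\Mmc}\not\sbs\llbracket\dnot\delta\rrbracket^{\Mmc}$ and the upward closure misses $\llbracket\dnot\delta\rrbracket^{\Mmc}$; for $\mathbf{C}$ ($l=k$) it separates every finite intersection $\bigcap_{j}\llbracket\gamma_{j}\rrbracket^{\Mmc}$ from $\llbracket\dnot\delta\rrbracket^{\Mmc}$; and for $\mathbf{N}$ the additional $k=l=0$ instance supplies a world satisfying $\delta$, forcing $\llbracket\dnot\delta\rrbracket^{\Mmc}\neq\Wmc$. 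In every case $\llbracket\dnot\delta\rrbracket^{\Mmc}\notin\Nmc_{i}(w_{n})$, hence $\Mmc,w_{n}\mdl\D_{i}\delta$. Applying the truth lemma to $0:\p$ gives $\Mmc,w_{0}\mdl\p$ on a frame of the class of $\mathbf{L}$, i.e.\ $\p$ is $\LnALC$ satisfiable.

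\emph{Main obstacle.} The delicate point is precisely this diamond case, where non-normality lives, together with the well-definedness of $\Nmc_{i}$: one must (i) resolve the apparent circularity between $\Nmc_{i}$ and the truth sets it refers to (by the modal-depth stratification, so that the box case holds by construction and the arguments queried at witness worlds are always available), and (ii) verify, closure by closure for $\mathbf{E},\mathbf{M},\mathbf{C},\mathbf{N}$, that the specific branching of $\mathsf{R}_{\mathbf{L}}$ yields exactly the witnesses needed to keep each forbidden set $\llbracket\dnot\delta\rrbracket^{\Mmc}$ out of the closed neighbourhood. The uniform treatment of concept- and formula-modalities and the handling of subset-blocking in the $\ALC$ part are comparatively routine once truth sets are viewed as plain subsets of the finite world set.
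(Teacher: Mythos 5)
Your overall architecture matches the paper's: build a model whose worlds are the labels of the complete, clash-free completion set $\mathbf{T}$, prove a truth lemma by induction, and use $\LnALC$-completeness (non-applicability of $\mathsf{R}_{\mathbf{L}}$) to handle diamonds. The decisive difference is how $\Nmc_{i}$ is defined, and it opens a genuine gap. You set $\Nmc_{i}(w_{n})=\{\llbracket\gamma\rrbracket^{\Mmc}\mid n:\B_{i}\gamma\in\mathbf{T}\}$, i.e.\ you use the \emph{semantic} truth sets of the model you are constructing. First, the proposed stratification by modal depth does not make this well-defined: to evaluate even a depth-$1$ concept $\B_{i}D$ at $w_{n}$ you must decide whether $\llbracket D\rrbracket^{\Mmc}_{x}$ belongs to the \emph{full} set $\Nmc_{i}(w_{n})$, which contains truth sets $\llbracket\gamma\rrbracket^{\Mmc}$ of box-arguments $\gamma$ of arbitrarily large modal depth that are not yet available at that stage; the membership (and, worse, the non-membership needed for diamonds) cannot be answered level by level. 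Second, and independently, your diamond case needs, for each $n:\B_{i}\gamma\in\mathbf{T}$, to place the witness label $o$ inside or outside $\llbracket\gamma\rrbracket^{\Mmc}$, which requires the truth lemma for $\gamma$ (or $\dnot\gamma$) at $w_{o}$. But $\gamma$ is the argument of an \emph{arbitrary} box constraint at $n$ and can have larger weight and larger modal depth than $\D_{i}\delta$, so no induction on weight or depth of the formula currently being treated supplies that hypothesis. Your phrase ``evaluating the simpler arguments $\gamma_{j},\delta$'' is where the argument breaks: only $\delta$ is guaranteed to be simpler.

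The paper's construction is engineered precisely to avoid both problems: it defines $\Nmc_{i}(n)$ from the purely syntactic approximations $\lfloor\gamma\rfloor_{x}=\{m\mid m:\gamma(x)\in S_{m}\}$ and $\lceil\gamma\rceil_{x}=\Wmc\setminus\{m\mid m:\dnot\gamma(x)\in S_{m}\}$, taking all $\alpha$ with $\lfloor\gamma\rfloor_{x}\subseteq\alpha\subseteq\lceil\gamma\rceil_{x}$ (and intersections thereof for $\C$, upward closure for $\Mn$, adding the unit for $\mathbf{N}$). This makes $\Nmc_{i}$ well-defined outright, makes the box case of the truth lemma need the inductive hypothesis only for the box's own argument $D$ (to show $\lfloor D\rfloor_{x}\subseteq\llbracket D\rrbracket^{\Mmc}_{x}\subseteq\lceil D\rceil_{x}$), and makes the diamond case go through because the witness $o$ given by completeness lies in $\lfloor\gamma\rfloor_{x}$ or outside $\lceil\gamma\rceil_{x}$ as a \emph{syntactic} fact, with the inductive hypothesis invoked only for $\delta$. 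To repair your proof you should replace your semantic neighbourhoods by these sandwich sets (or otherwise find an induction order under which the truth lemma for every box-argument at every witness world is available before the diamond that needs it, which weight and modal depth do not provide).
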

		\begin{proof}
			Given an $\LnALC$-complete and clash-free completion set $\mathbf{T}$ for $\p$,
			define, for $n \in \mathsf{L}_{\mathbf{T}}$, $\psi \in \forneg(\p)$, $C \in \conneg(\p)$, and $x$ occurring in $\mathbf{T}$,
			
			\begin{center}
				\begin{tabular}{lll}	
					\vspace{0.25cm}
					$\lfloor C \rfloor_{x}$ = $\{ n \in \mathsf{L}_{\mathbf{T}} \mid n : C(x) \in S_{n} \}$, & \ \ &
					$\lfloor \psi \rfloor$ = $\{ n \in \mathsf{L}_{\mathbf{T}} \mid n : \psi \in S_{n} \}$, \\
					\vspace{0cm}
					$\lceil C \rceil_{x}$ = $\mathsf{L}_{\mathbf{T}} \setminus \{ n \in \mathsf{L}_{\mathbf{T}} \mid n : \dnot C(x) \in S_{n} \}$, &&
					$\lceil \psi \rceil$ = $\mathsf{L}_{\mathbf{T}} \setminus \{ n \in \mathsf{L}_{\mathbf{T}} \mid n : \dnot\psi \in S_{n}\}$. \\
				\end{tabular}
			\end{center}
			
			\noindent
			Moreover, define $\Gamma^{x}_{n} =  \{ \psi \mid n : \psi \in S_{n} \} \cup \{ C \mid  n : C(x) \in S_{n} \}$ and let $\gamma, \delta$ range over $\MLnALC$ formulas or concepts,
			where: $\lfloor \gamma \rfloor_{x} = \lfloor \psi \rfloor$, if $\gamma = \psi$, and $\lfloor \gamma \rfloor_{x}  = \lfloor C \rfloor_{x} $, if $\gamma = C$; and similarly for $\lceil \gamma \rceil_{x}$.
			%
			We set $\Mmc = (\Fmc, \Imc)$, with $\Fmc = (\Wmc, \{ \Nmc_{i} \}_{i \in I})$ and $\Imc_{n} = (\Delta_{n}, \cdot^{\Imc_{n}})$, for $n \in \Wmc$, defined as follows:
			\begin{itemize}
				\item $\Wmc =  \mathsf{L}_{\mathbf{T}}$;
				\item for every $i \in I = \{1, \ldots, n\}$, we set $\Nmc_{i} \colon \W \rightarrow 2^{2^{\Wmc}}$ such that:

								\begin{center}
									\begin{tabular}{ll}
										\vspace{0.2cm}
										-- \ for $\mathbf{L} = \mathbf{E}$: & 
										$\Nmc_{i}(n) = \big\{ \alpha \mid \textnormal{for some} \ \Box_{i}\gamma \in \Gamma^{x}_{n}  \colon \lfloor \gamma \rfloor_{x} \subseteq \alpha \subseteq \lceil \gamma \rceil_{x} \big\}$; \\
										
										\vspace{0.2cm}	 
										-- \ for $\mathbf{L} = \mathbf{M}$: &
										$\Nmc_{i}(n) = \big\{ \alpha \mid \textnormal{for some} \  \Box_{i}\gamma \in \Gamma^{x}_{n}  \colon \lfloor \gamma \rfloor_{x} \subseteq \alpha \big\}$; \\
										
										\vspace{0cm}			
										-- \ for $\mathbf{L} = \mathbf{C}$: &
										$\Nmc_{i}(n) = \big\{ \alpha \mid \textnormal{for some} \ \Box_{i}\gamma_{1}  \in \Gamma^{{x}_{1}}_{n}, \ldots, \Box_{i}\gamma_{k} \in \Gamma^{{x}_{k}}_{n} \colon$ \\
										\vspace{0.2cm}
										& \phantom{$\Nmc_{i}(n) = \{  \alpha \mid \, $}
										$\bigcap^{k}_{j = 1} \lfloor \gamma_{j} \rfloor_{{{x}_{j}}}
										\subseteq \alpha \subseteq
										\bigcap^{k}_{j = 1} \lceil \gamma_{j} \rceil_{{{x}_{j}}} \big\}$; \\
										
										\vspace{0cm}		
										-- \ for $\mathbf{L} = \mathbf{N}$: & 
										$\Nmc_{i}(n) = \big\{ \alpha \mid \textnormal{for some} \ \Box_{i}\gamma \in \Gamma^{x}_{n}  \colon \lfloor \gamma \rfloor_{x} \subseteq \alpha \subseteq \lceil \gamma \rceil_{x} \big\} \cup \Wmc$; \\
									\end{tabular}
								\end{center}
								
								\item $\Delta_{n} = \{ x \in \mathsf{N_{V}} \mid x \ \text{occurs in} \ S_{n} \}$;
								\item $A^{\Imc_{n}} = \{ x \in \Delta_{n} \mid n : A(x) \in S_{n} \}$;
								\item $r^{\Imc_{n}} = \{ (x, y) \in \Delta_{n} \times \Delta_{n} \mid n : r(x, y) \in S_{n} \ \text{or} \ n : r(z, y) \in S_{n}, $
								for some $z$ blocking $x$ in $S_{n}$ \}.
							\end{itemize}
							
							First, we observe the following.
							
							\begin{itemize}
								\item For $\mathbf{L}  = \mathbf{M}$, we have that $\Mmc = (\Fmc, \Int)$ is such that $\Fmc = (\Wmc,  \{\Nmc_i \}_{i \in I})$ is supplemented. Indeed,
								for all $n \in \Wmc$, $\alpha,\beta\subseteq \Wmc$, suppose that $\alpha\in \Nmc_{i}(n)$ and $\alpha \subseteq \beta$. By definition, this implies that: for some $\Box_{i} \gamma \in \Gamma^{x}_{n} $, $\lfloor \gamma \rfloor_{x} \subseteq \alpha \subseteq \beta$. Hence, $\beta \in \Nmc_{i}(n)$.
								\item For $\mathbf{L}  = \mathbf{C}$, we have that $\Mmc = (\Fmc, \Int)$ is such that $\Fmc = (\Wmc,  \{\Nmc_i \}_{i \in I})$ is closed under intersection. Indeed, for all $n \in \Wmc$, $\alpha,\beta\subseteq \Wmc$, suppose that $\alpha\in \Nmc_{i}(n)$ and $\beta\in \Nmc_{i}(n)$.
								Now suppose that, for some
								$\Box_{i}\gamma_{1}  \in \Gamma^{{x}_{1}}_{n}, \ldots, \Box_{i}\gamma_{k} \in \Gamma^{{x}_{k}}_{n} \colon
								\bigcap^{k}_{j = 1} \lfloor \gamma_{j} \rfloor_{{x}_{j}}
								\subseteq \alpha \subseteq
								\bigcap^{k}_{j = 1} \lceil \gamma_{j} \rceil_{{x}_{j}}$
								and, for some
								$\Box_{i}\delta_{1} \in \Gamma^{y_{1}}_{n}, \ldots, \Box_{i}\delta_{h} \in \Gamma^{y_{h}}_{n}  \colon
								\bigcap^{h}_{j = 1} \lfloor \delta_{j} \rfloor_{y_{j}}
								\subseteq \beta \subseteq
								\bigcap^{h}_{j = 1} \lceil \delta_{j} \rceil_{y_{j}}$.
								Then for some $\Box_{i}\gamma_{1}  \in \Gamma^{{x}_{1}}_{n}, \ldots, \Box_{i}\gamma_{k} \in \Gamma^{{x}_{k}}_{n}$ and some $\Box_{i}\delta_{1} \in \Gamma^{y_{1}}_{n}, \ldots, \Box_{i}\delta_{h} \in \Gamma^{y_{h}}_{n} $ the following holds, which in turn implies that
								$\alpha\cap\beta\in \Nmc_{i}(n)$:
								\begin{center}$\bigcap^{k}_{j = 1} \lfloor \gamma_{j} \rfloor_{{x}_{j}} \cap \bigcap^{h}_{j = 1} \lfloor \delta_{j} \rfloor_{y_{j}}
									\subseteq
									\alpha \cap \beta
									\subseteq
									\bigcap^{k}_{j = 1} \lceil \gamma_{j} \rceil_{{x}_{j}} \cap \bigcap^{h}_{j = 1} \lceil \delta_{j} \rceil_{y_{j}}$\end{center}
								
								\item For $\mathbf{L}  = \mathbf{N}$, we have that $\Mmc = (\Fmc, \Int)$, with $\Fmc = (\Wmc,  \{\Nmc_i \}_{i \in I})$, is such that $\Fmc$ contains the unit. Indeed, by construction, for all $n \in \Wmc$, $\Wmc \in \Nmc_{i}(n)$.
							\end{itemize}
							
							We then require the following claims.
							\begin{claim}
								\label{cla:conind}
								For every $n \in \Wmc$, $C \in \conneg(\p)$, and $x \in \Delta_{n}$: if $n : C(x) \in S_{n}$, then $x \in C^{\Imc_{n}}$.
							\end{claim}
							\begin{proof}[Proof of Claim]
								We show the claim by induction on the weight of $C$ (in NNF).
								The base case of $C = A$ comes immediately from the definitions.
								For the base case of $C = \lnot A$, suppose that $n : \lnot A(x) \in S_{n}$. Since $\mathbf{T}$ is clash-free, we have that $n : A(x) \not \in S_{n}$, and thus $x \not \in A^{\Imc_{n}}$ by definition of $A^{\Imc_{n}}$, meaning $x \in (\lnot A)^{\Imc_{n}}$.
								The inductive cases of $C = D \sqcap E$ and $C = D \sqcup E$ come from the fact that $S_{n}$ is closed under $\mathsf{R}_{\sqcap}$ and $\mathsf{R}_{\sqcup}$, respectively, and straightforward applications of the inductive hypothesis.
								We show the remaining cases (cf. also~\cite[Claim 15.2]{GabEtAl03}).
								
								$C = \exists r.D$.
								Let $n : \exists r.D(x) \in S_{n}$, meaning that $\exists r.D \in \Gamma^{x}_{n}$. We distinguish two cases.
								\begin{itemize}
									\item $x$ is not blocked by any variable in $S_{n}$. Since $S_{n}$ is closed under $\mathsf{R}_{\exists}$, there exists $y$ occurring in $S_{n}$ such that $n : r(x,y) \in S_{n}$ and $n : D(y) \in S_{n}$. Thus, by definition, $(x, y) \in r^{\Imc_{n}}$ and $n : D(y) \in S_{n}$. By inductive hypothesis, we obtain that $x \in (\exists r.D)^{\Imc_{n}}$.
									\item $x$ is blocked by a variable in $S_{n}$, implying that there exists a $<$-minimal (since $<$ is a well-ordering) $y$ occurring in $S_{n}$ such that $y < x$ and $\{ E \mid n : E(x) \in S_{n} \} \subseteq \{ E \mid n : E(y) \in S_{n} \}$.
									In turn, this implies that $y$ is not blocked by any other variable $z$ in $S_{n}$ (for otherwise $z$ would block $x$, with $z < y$, against the fact that $y$ is $<$-minimal).
									By reasoning as in the case above, since $y$ is not blocked and $S_{n}$ is closed under $\mathsf{R}_{\exists}$, we have a variable $z$ occurring in $S_{n}$ such that $n : r(y,z) \in S_{n}$ and $n : D(x) \in S_{n}$.
									Since $y$ blocks $x$, by definition we have that $(x, z) \in r^{\Imc_{n}}$, and by inductive hypothesis we get from $n : D(z)$ that $z \in D^{\Imc_{n}}$.
									Thus, $x \in (\exists r.D)^{\Imc_{n}}$.
								\end{itemize}
								
								$C = \forall r.D$. 
								Let $n : \forall r.D(x) \in S_{n}$, meaning that $\forall r.D \in \Gamma^{x}_{n}$, and suppose that $(x, y) \in r^{\Imc_{n}}$. By definition, either $n : r(x,y) \in S_{n}$ or $n : r(z,y) \in S_{n}$, for some $z$ blocking $x$ in $S_{n}$.
								In the former case, since $S_{n}$ is closed under $\mathsf{R}_{\forall}$, we get that $n : D(y) \in S_{n}$.
								In the latter case, since $z$ blocks $x$ in $S_{n}$, we obtain $n : \forall r.D(z) \in S_{n}$; again, since $S_{n}$ is closed under $\mathsf{R}_{\forall}$, this implies that $n : D(y) \in S_{n}$.
								Hence, in both cases, we have $n : D(y) \in S_{n}$.
								By inductive hypothesis, this means that $y \in D^{\Imc_{n}}$.
								Since $y$ was arbitrary, we conclude that $x \in (\forall r.D)^{\Imc_{n}}$.
								
								
								%
								
								$C = \Box_{i} D$.
								Let $n : \Box_{i} D(x) \in S_{n}$, meaning that $\Box_{i} D \in \Gamma^{x}_{n}$.
								Consider
								$\mathbf{L} \in \mathsf{Log}$.
								\begin{enumerate}[leftmargin=*, align=left]
									\item[$\mathbf{L} = \mathbf{E}$.]
									We have by inductive hypothesis that
									$\lfloor D \rfloor_{x} = \{ n \in \Wmc \mid n : D(x) \in S_{n} \} \subseteq \{ n \in \Wmc \mid x \in D^{\Imc_{n}} \} = \llbracket D \rrbracket^{\Mmc}_{x}$.
									By inductive hypothesis 
									(since $| \dnot D | = | D |$), 
									we also have that
									$\{ n \in \Wmc \mid n : \dnot D(x) \in S_{n} \} \subseteq \{ n \in \Wmc \mid x \in (\dnot D)^{\Imc_{n}} \} = \llbracket \dnot D \rrbracket^{\Mmc}_{x} = \Wmc \setminus \llbracket D \rrbracket^{\Mmc}_{x}$.
									Hence, $\llbracket D \rrbracket^{\Mmc}_{x} \subseteq \Wmc \setminus \{ w \in \Wmc \mid n : \dnot D(x) \in S_{n} \} = \lceil D \rceil_{x}$. In conclusion, we have $\Box_{i} D \in \Gamma^{x}_{n} $ such that $\lfloor D \rfloor_{x} \subseteq \llbracket D \rrbracket^{\Mmc}_{x} \subseteq \lceil D \rceil_{x}$. Thus, by definition, $\llbracket D \rrbracket^{\Mmc}_{x} \in \Nmc_{i}(n)$, as required.
									\item[$\mathbf{L} = \mathbf{M}$.]
									We have by inductive hypothesis that
									$\lfloor D \rfloor_{x} = \{ n \in \Wmc \mid n : D(x) \in S_{n} \} \subseteq \{ n \in \Wmc \mid x \in D^{\Imc_{n}} \} = \llbracket D \rrbracket^{\Mmc}_{x}$.
									Thus, we have $\Box_{i} D \in \Gamma^{x}_{n} $ such that $\lfloor D \rfloor_{x} \subseteq \llbracket D \rrbracket^{\Mmc}_{x}$. By definition, this means $\llbracket D \rrbracket^{\Mmc}_{x} \in \Nmc_{i}(n)$, as required.

									%
									\item[$\mathbf{L} \in\{ \mathbf{C}, \mathbf{N}\}$.] This cases are analogous to the case for $\mathbf{L} = \mathbf{E}$.
								\end{enumerate}
								%

								$C = \Diamond_{i} D$. Let $n : \Diamond_{i}D(x) \in S_{n}$. Consider 
								$\mathbf{L} \in \mathsf{Log}$.
								\begin{enumerate}[leftmargin=*, align=left]
									%
									\item[$\mathbf{L} \in \{\mathbf{E}, \mathbf{M}\}$.]
									We distinguish two cases.
									$(i)$ There exists no $\Box_{i} \gamma \in \Gamma^{y}_{n}$. This means that $\Nmc_{i}(n) = \emptyset$. Thus, $\Wmc \setminus \llbracket D \rrbracket^{\Mmc}_{x} \not \in \Nmc_{i}(n)$, meaning that $x \in (\Diamond_{i}D)^{\Imc_{n}}$.
									$(ii)$ There exists $\Box_{i} \gamma \in \Gamma^{y}_{n}$. We then reason similarly to the case for $\mathbf{L} = \mathbf{C}$.

									\item[$\mathbf{L} = \mathbf{C}$.]
									We distinguish two cases.
									$(i)$ There exist no $\Box_{i} \gamma_{1} \in \Gamma^{y_{1}}_{n}, \ldots, \Box_{i} \gamma_{k} \in \Gamma^{y_{k}}_{n}$. As for $\mathbf{L} = \mathbf{E}$, we obtain $x \in (\Diamond_{i}D)^{\Imc_{n}}$.
									$(ii)$ There exist $\Box_{i} \gamma_{1} \in \Gamma^{y_{1}}_{n}, \ldots, \Box_{i} \gamma_{k} \in \Gamma^{y_{k}}_{n}$.
									Since $\mathbf{T}$ is $\LnALC$-complete, there exists $m \in \Wmc$ such that:
									$\gamma_{1}  \in \Gamma^{y_{1}}_{m}, \ldots, \gamma_{k} \in \Gamma^{y_{k}}_{m}$ and $D \in \Gamma^{x}_{m}$; or
									$\dnot \gamma_{j} \in \Gamma^{y_{j}}_{m}$ and $\dnot D \in \Gamma^{x}_{m}$, for some $j\leq k$.
									By inductive hypothesis, the previous step implies that there exists $m \in \Wmc$ such that:
									$\gamma_{1} \in \Gamma^{y_{1}}_{m}, \ldots, \gamma_{k} \in \Gamma^{y_{k}}_{m}$ and $x \in D^{\Imc_{m}}$; or
									$\dnot \gamma_{j} \in \Gamma^{y_{j}}_{m}$ and $x \in \dnot D^{\Imc_{m}}$, for some $j\leq k$.
									Equivalently, it is not the case that, for every $v \in \Wmc$:
									$\gamma_{1} \in \Gamma^{y_{1}}_{m}, \ldots, \gamma_{k} \in \Gamma^{y_{k}}_{m}$ implies $x \not \in D^{\Imc_{m}}$; and
									for all $j\leq k$, $x \in \dnot D^{\Imc_{m}}$ implies $\dnot \gamma_{j} \not \in \Gamma^{y_{j}}_{m}$.
									In other words, it is not the case that:
									$\bigcap_{j = 1}^{k} \lfloor \gamma_{j} \rfloor_{y_{j}} \subseteq \Wmc \setminus \llbracket D \rrbracket^{\Mmc}_{x}$; and
									$\Wmc \setminus \llbracket D \rrbracket^{\Mmc}_{x} \subseteq \bigcap_{j = 1}^{k}\lceil \gamma_{l} \rceil_{y_{l}}$.
									Thus, $\Wmc \setminus  \llbracket D \rrbracket^{\Mmc}_{x} \not \in \Nmc_{i}(n)$, i.e., $x \in (\Diamond_{i}D)^{\Imc_{n}}$, as required.

									\item[$\mathbf{L} = \mathbf{N}$.]
									We distinguish two cases.
									$(i)$ There exists no $\Box_{i} \gamma \in \Gamma^{y}_{n} $. This means that $\Nmc_{i}(n) = \Wmc$.
									Since $\mathbf{T}$ is $\LnALC$-complete, there exists $m \in \Wmc$ such that $D \in \Gamma^{x}_{m}$, i.e., $m : D(x) \in S_{m}$. By inductive hypothesis, this implies $x \in D^{\Imc_{m}}$, that is, $\llbracket D \rrbracket^{\Mmc}_{x} \neq \emptyset$. This holds iff $\Wmc \setminus \llbracket D \rrbracket^{\Mmc}_{x} \neq \Wmc$, and thus $\Wmc \setminus \llbracket D \rrbracket^{\Mmc}_{x} \not \in \Nmc_{i}(n)$. Hence, $x \in (\Diamond_{i}D)^{\Imc_{n}}$.
									$(ii)$ There exists $\Box_{i} \gamma \in \Gamma^{y}_{n}$. We then reason similarly to the case for $\mathbf{L} = \mathbf{C}$.	\qedhere
								\end{enumerate}
							\end{proof}

							\begin{claim}
								\label{cla:forind}
								For every $w \in \Wmc$ and $\psi \in \conneg(\p)$: if $n : \psi \in S_{n}$, then $\Mmc, w \models \psi$.
							\end{claim}
							\begin{proof}[Proof of Claim]
								We prove the claim by induction on the weight of $\p$ (in NNF).
								
								$\psi = (\top \sqsubseteq C)$. Let $n : \top \sqsubseteq C \in S_{n}$ and let $x \in \Delta_{n}$. Since $S_{n}$ is closed under $(\mathsf{R}_{=})$ and $x$ occurs in $S_{n}$, we have that $n : C(x) \in S_{n}$. By Claim~\ref{cla:conind}, we have that $x \in C^{\Imc_{n}}$. Given that $x$ is arbitrary, we conclude that $\Mmc, n \models \top \sqsubseteq C$.
								
								$\psi = \lnot (\top \sqsubseteq C)$. Let $n : \lnot (\top \sqsubseteq C) \in S_{n}$. Since $S_{n}$ is closed under $(\mathsf{R}_{\neq})$, there exists $x$ occurring in $S_{n}$ such that $n : \dnot C(x) \in S_{n}$. By Claim~\ref{cla:conind}, we obtain that $x \in (\dnot C)^{\Imc_{n}}$, for some $x \in \Delta_{w}$. Hence, $\Mmc, n \models \lnot (\top \sqsubseteq C)$.
								
								The inductive cases of 
								$\psi = \chi \land \vartheta$
								and
								$\psi = \chi \lor \vartheta$ follow from the definitions and straighforward applications of the inductive hypothesis.
								%
								Moreover the inductive cases of 
								$\psi = \Box_{i} \chi$
								and 
								$\psi = \Diamond_{i} \chi$ can be proved analogously to Claim~\ref{cla:conind}.
							\end{proof}

							Since,
							by definition,
							we have 
							$0 : \p \in S_{0} \subseteq \mathbf{T}$,
							thanks to Claim~\ref{cla:forind} we obtain $\Mmc, 0 \models \p$.
						\end{proof}

						
						We finally show completeness of the $\LnALC$ tableau algorithm.
						
						\begin{theorem}[Completeness]
							\label{thm:completeness}
							If $\p$ is $\LnALC$ satisfiable, then, having started on the initial completion set $\mathbf{T}_{\p}$, the $\LnALC$ tableau algorithm constructs an $\LnALC$-complete and clash-free completion set for $\p$.
						\end{theorem}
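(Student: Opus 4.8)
The plan is to prove completeness by a \emph{model-guided} run of the algorithm: from a satisfying model we read off all the non-deterministic choices so that the completion set produced never clashes, and then invoke termination (Theorem~\ref{thm:termination}) to conclude that the run ends on an $\LnALC$-complete, clash-free set. Concretely, assume $\p$ is $\LnALC$ satisfiable and fix a neighbourhood model $\Mmc = (\Fmc, \Int)$ of the class associated with $\mathbf{L}$ together with a world $w_0$ such that $\Mmc, w_0 \models \p$. Throughout the construction I would maintain a \emph{faithfulness invariant} consisting of a map $\pi \colon \mathsf{L}_{\mathbf{T}} \to \Wmc$ from labels to worlds and, for each $n \in \mathsf{L}_{\mathbf{T}}$, an assignment $a_n$ of the variables occurring in $S_n$ to elements of $\Delta_{\pi(n)}$, subject to: (i) if $n : \psi \in S_n$ then $\Mmc, \pi(n) \models \psi$; (ii) if $n : C(x) \in S_n$ then $a_n(x) \in C^{\Imc_{\pi(n)}}$; and (iii) if $n : r(x,y) \in S_n$ then $(a_n(x), a_n(y)) \in r^{\Imc_{\pi(n)}}$. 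For the initial set $\mathbf{T}_{\p} = \{0 : \p, 0 : \top(x)\}$ I set $\pi(0) = w_0$ and let $a_0(x)$ be any element of the non-empty domain $\Delta_{w_0}$; the invariant holds because $\Mmc, w_0 \models \p$ and $\top^{\Imc_{w_0}} = \Delta_{w_0}$.

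The core of the argument is to show that the invariant can be preserved by every rule, with the branch and the fresh witnesses selected according to $\Mmc$. The Boolean and role rules are routine reformulations of the semantics: $\mathsf{R}_{\lor}$ and $\mathsf{R}_{\sqcup}$ pick the disjunct true at $\pi(n)$, resp.\ containing $a_n(x)$; the rules $\mathsf{R}_{\land}$, $\mathsf{R}_{\sqcap}$, $\mathsf{R}_{\forall}$, $\mathsf{R}_{=}$ transfer membership directly from the satisfaction clauses; and $\mathsf{R}_{\exists}$, $\mathsf{R}_{\neq}$ read a witnessing element off the model and assign it to the freshly introduced variable. Blocking only suppresses applications of $\mathsf{R}_{\exists}$, hence never endangers faithfulness.

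The crux is the modal rule $\mathsf{R}_{\mathbf{L}}$. When it fires on $n : \Box_i\gamma_1, \ldots, n : \Box_i\gamma_k, n : \Diamond_i\delta$, faithfulness together with the satisfaction clauses for $\Box_i$ and $\Diamond_i$ gives that the truth set $T_j$ of each $\gamma_j$ lies in $\Nmc_i(\pi(n))$, whereas the truth set of $\dnot\delta$ lies outside $\Nmc_i(\pi(n))$. I then claim the existence of a world $v \in \Wmc$ realising one branch of the rule, and I set $\pi(m) = v$, transporting the relevant objects by $a_m(x_j) = a_n(x_j)$. The existence of $v$ is established by contradiction using the frame property of $\mathbf{L}$: for $\mathbf{E}$ (and the $k = 1$ case of $\mathbf{N}$) the failure of both branches forces the truth sets of $\gamma_1$ and of $\dnot\delta$ to coincide, contradicting that one is in and the other out of $\Nmc_i(\pi(n))$; for $\mathbf{M}$ the single branch can fail only if $T_1$ is contained in the complement of the $\delta$-set, and supplementation then places that complement in $\Nmc_i(\pi(n))$, contradicting the diamond; for $\mathbf{C}$, closure under intersection applied to $T_1, \ldots, T_k$ yields the same contradiction; and for the $k = 0$ case of $\mathbf{N}$ the presence of the unit $\Wmc \in \Nmc_i(\pi(n))$ forces the $\delta$-set to be non-empty, supplying the single witness $m : \delta$.

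It then remains to observe that faithfulness entails clash-freeness: a clash $\{n : \psi, n : \neg\psi\}$ would give $\Mmc, \pi(n) \models \psi$ and $\Mmc, \pi(n) \not\models \psi$, and a clash $\{n : C(x), n : \lnot C(x)\}$ would give $a_n(x) \in C^{\Imc_{\pi(n)}} \cap (\Delta_{\pi(n)} \setminus C^{\Imc_{\pi(n)}})$, both absurd. Since Theorem~\ref{thm:termination} guarantees that the model-guided run halts, it must halt on an $\LnALC$-complete, clash-free completion set, which is what is required. I expect the main obstacle to lie entirely in the modal step: beyond getting the four set-theoretic witness arguments uniformly right, the delicate point is the treatment of modalised \emph{concepts} under varying domains, where the truth set $\llbracket D \rrbracket^{\Mmc}_{a_n(x)}$ ranges only over the worlds in which the object $a_n(x)$ exists, so that the complementation and intersection steps in the witness argument must be carried out with respect to these object-relative truth sets rather than the full set $\Wmc$ of worlds.
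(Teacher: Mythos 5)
Your proposal is correct and follows essentially the same route as the paper: your ``faithfulness invariant'' is exactly the paper's notion of $\Mmc$-compatibility, your witness argument for $\mathsf{R}_{\mathbf{L}}$ (contradiction via coincidence of truth sets for $\mathbf{E}$, supplementation for $\mathbf{M}$, closure under intersection for $\mathbf{C}$, and the unit for $\mathbf{N}$) is precisely the paper's Claim~\ref{thm:completeness}.1, and the conclusion via Theorem~\ref{thm:termination} plus clash-freeness from the invariant matches the paper's wrap-up. The object-relative truth sets you flag at the end are indeed the delicate point under varying domains, and the paper handles them the same way, by phrasing everything through the sets $\Phi^{d}_{w}$.
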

						\begin{proof}
							Let $\Mmc = (\Fmc, \Imc)$ be an $\LnALC$-model satisfying $\p$, with $\Fmc = (\Wmc, \{ \Nmc \}_{i \in I})$, i.e.,
							$\Mmc, w_{\p} \models \p$, for some $w_{\p} \in \Wmc$.
							We require the following definitions and technical results.
							%
							First, we let $\gamma, \delta$ (possibly indexed) range over $\MLnALC$ concepts and formulas, with $\llbracket \gamma \rrbracket^{\Mmc}_{d} = \llbracket \psi \rrbracket^{\Mmc}$, if $\gamma = \psi$, and $\llbracket \gamma \rrbracket^{\Mmc}_{d} = \llbracket C \rrbracket^{\Mmc}_{d}$, if $\gamma = C$.
							Then, for $w \in \Wmc$ and $d \in \bigcup_{v \in \Wmc} \Delta_{v}$, define
							$\Phi^{d}_{w} = \{ \psi \in \forneg(\p) \mid \Mmc, w \models \psi \} \cup \{ C \in \conneg(\p) \mid d \in C^{\Imc_{w}} \}$.
							Observe that, if $C \in \Phi^{d}_{w}$, then $d \in \Delta_{w}$.
							We now show that the following holds.
							%
							\begin{claim}
								\label{cla:truth}
								For every $w \in \Wmc$ and every $d_{1}, \ldots, d_{k}, e \in \bigcup_{v \in \Wmc} \Delta_{v}$:
								if $\Box_{i}\gamma_{1} \in \Phi^{d_{1}}_{w}, \ldots, \Box_{i} \gamma_{k} \in \Phi^{d_{k}}_{w}$ and $\Diamond_{i} \delta \in \Phi^{e}_{w}$, then there exists $v \in \Wmc$ such that:
								\begin{enumerate}[label=$(\arabic*)$, start=0]
									\item $\gamma_{1} \in \Phi^{d_{1}}_{v}, \ldots, \gamma_{k} \in \Phi^{d_{k}}_{v}$ and $\delta \in \Phi^{e}_{v}$; or
									\item $\dnot \gamma_{1} \in \Phi^{d_{1}}_{v}$ and $\dnot \delta \in \Phi^{e}_{v}$; or
									\item[] $\vdots$
									\item[$(l)$] $\dnot \gamma_{l} \in \Phi^{d_{k}}_{v}$ and $\dnot \delta \in \Phi^{e}_{v}$;
								\end{enumerate}
								where:
								for $\mathbf{L} = \mathbf{E}$, $k = l = 1$;
								for $\mathbf{L} = \mathbf{M}$, $k = 1$ and $l = 0$;
								for $\mathbf{L} = \mathbf{C}$, $k \geq 1$ and $l = k$;
								for $\mathbf{L} = \mathbf{N}$, $k = l = 1$ or $k = l = 0$.
							\end{claim}
								\begin{proof}
									
									We consider each $\mathbf{L} \in \mathsf{Log}$.
									
									\begin{enumerate}[leftmargin=*, align=left]
										\item[$\mathbf{L} = \mathbf{E}$.] 
										Assume $\Box_{i} \gamma \in \Phi^{d}_{w}, \Diamond_{i} \delta \in \Phi^{e}_{w}$, meaning that $\llbracket \gamma \rrbracket^{\Mmc}_{d} \in \Nmc_{i}(w)$ and $\Wmc \setminus \llbracket \delta \rrbracket^{\Mmc}_{e} \not \in \Nmc_{i}(w)$, i.e., $\llbracket \dnot \delta \rrbracket^{\Mmc}_{e} \not \in \Nmc_{i}(w)$. Towards a contradiction, suppose that, for every $v \in \Wmc$, the following holds:
											($\gamma \not \in \Phi^{d}_{v}$ or $\delta \not \in \Phi^{e}_{v}$) and
											($\dnot \gamma \not \in \Phi^{d}_{v}$ or $\dnot \delta \not \in \Phi^{e}_{v}$).
										Equivalently, for every $v \in \Wmc$:
											($\gamma\in \Phi^{d}_{v}$ implies $\delta \not \in \Phi^{e}_{v}$) and
											($\dnot \delta \in \Phi^{e}_{v}$ implies $\dnot \gamma \not \in \Phi^{d}_{v}$).
										By definition, we have that $\gamma \in \Phi^{d}_{v}$ iff $\dnot \gamma \not \in \Phi^{d}_{v}$ and $\delta \not \in \Phi^{e}_{v}$ iff $\dnot \delta \in \Phi^{e}_{v}$. Thus, the previous step means:
											($ \llbracket \gamma \rrbracket^{\Mmc}_{d} \subseteq \llbracket \dnot \delta \rrbracket^{\Mmc}_{e}$) and
											($  \llbracket \dnot \delta \rrbracket^{\Mmc}_{e} \subseteq \llbracket \gamma \rrbracket^{\Mmc}_{d}$),
											i.e.,
											$\llbracket \gamma \rrbracket^{\Mmc}_{d} = \llbracket \dnot \delta \rrbracket^{\Mmc}_{e}$,
											contradicting the assumption that $\llbracket \gamma \rrbracket^{\Mmc}_{d} \in \Nmc_{i}(w)$ and $\llbracket \dnot \delta \rrbracket^{\Mmc}_{e} \not \in \Nmc_{i}(w)$.

											\item[$\mathbf{L} = \mathbf{M}$.] 
											Assume $\Box_{i} \gamma \in \Phi^{d}_{w}, \Diamond_{i} \delta \in \Phi^{e}_{w}$, meaning that $\llbracket \gamma \rrbracket^{\Mmc}_{d} \in \Nmc_{i}(w)$ and $\Wmc \setminus \llbracket \delta \rrbracket^{\Mmc}_{e} \not \in \Nmc_{i}(w)$, i.e., $\llbracket \dnot \delta \rrbracket^{\Mmc}_{e} \not \in \Nmc_{i}(w)$. Towards a contradiction, suppose that, for every $v \in \Wmc$, the following holds:
											$\gamma \not \in \Phi^{d}_{v}$ or $\delta \not \in \Phi^{e}_{v}$.
											Equivalently, for every $v \in \Wmc$:
											$\gamma\in \Phi^{d}_{v}$ implies $\delta \not \in \Phi^{e}_{v}$.
											By definition,
											the previous step means
											$ \llbracket \gamma \rrbracket^{\Mmc}_{d} \subseteq \llbracket \dnot \delta \rrbracket^{\Mmc}_{e}$.
											Since $\Mmc$ is supplemented, we have that $\llbracket \dnot \delta \rrbracket^{\Mmc}_{e} \in \Nmc_{i}(w)$,
											which is impossible.

											\item[$\mathbf{L} = \mathbf{C}$.] 
											Assume $\Box_{i} \gamma_{1} \in \Phi^{d_{1}}_{w}, \ldots, \Box_{i} \gamma_{k} \in \Phi^{d_{k}}_{w}, \Diamond_{i} \delta \in \Phi^{e}_{w}$, meaning that $\llbracket \gamma_{j} \rrbracket^{\Mmc}_{d_{j}} \in \Nmc_{i}(w)$, for $j = 1, \ldots, k$, and $\Wmc \setminus \llbracket \delta \rrbracket^{\Mmc}_{e} \not \in \Nmc_{i}(w)$, i.e., $\llbracket \dnot \delta \rrbracket^{\Mmc}_{e} \not \in \Nmc_{i}(w)$. Towards a contradiction, suppose that, for every $v \in \Wmc$, 
											none of the following holds:
											$(0)$ $\gamma_{1} \in \Phi^{d_{1}}_{v}, \ldots, \gamma_{k} \in \Phi^{d_{k}}_{v}$ and $\delta \in \Phi^{e}_{v}$; 
											$(1)$ $\dnot \gamma_{1} \in \Phi^{d_{1}}_{v}$ and $\dnot \delta \in \Phi^{e}_{v}$; ...;
											$(k)$ $\dnot \gamma_{k} \in \Phi^{d_{k}}_{v}$ and $\dnot \delta \in \Phi^{e}_{v}$.
											Equivalently, for every $v \in \Wmc$,
											it holds that
											$(0)$ $\gamma_{1} \in \Phi^{d_{1}}_{v}, \ldots, \gamma_{k} \in \Phi^{d_{k}}_{v}$ implies $\delta \not \in \Phi^{e}_{v}$; and
											$(1)$ $\dnot \delta \in \Phi^{e}_{v}$ implies $\dnot \gamma_{1} \not \in \Phi^{d_{1}}_{v}$; ...
											and 
											$(k)$ $\dnot \delta \in \Phi^{e}_{v}$ implies $\dnot \gamma_{k} \not \in \Phi^{d_{k}}_{v}$.
											By definition, from the previous step we obtain
											$(0)$ $\bigcap_{j = 1}^{k} \llbracket \gamma_{j} \rrbracket^{\Mmc}_{d_{j}} \subseteq \llbracket \dnot \delta \rrbracket^{\Mmc}_{e}$; and
											$(1)$ $\llbracket \dnot \delta \rrbracket^{\Mmc}_{e} \subseteq \llbracket \gamma_{1} \rrbracket^{\Mmc}_{d_{1}}$; ...
											and
											$(k)$ $\llbracket \dnot \delta \rrbracket^{\Mmc}_{e} \subseteq \llbracket \gamma_{k} \rrbracket^{\Mmc}_{d_{k}}$.
											Hence $\bigcap_{j = 1}^{k} \llbracket \gamma_{j} \rrbracket^{\Mmc}_{d_{j}} =  \llbracket \dnot \delta \rrbracket^{\Mmc}_{e}$.
											Since $\Mmc$ is closed under intersection, we obtain $\llbracket \dnot \delta \rrbracket^{\Mmc}_{e} \in \Nmc_{i}(w)$,
											a contradiction.

											\item[$\mathbf{L} = \mathbf{N}$.] 
											We distinguish two cases:
												%
											$(i)$ Let $k = l = 0$. That is, there exists no $\Box_{i} \gamma \in \Phi^{d}_{w}$, while $\Diamond_{i} \delta \in \Phi^{e}_{w}$, meaning that $\Wmc \setminus \llbracket \delta \rrbracket^{\Mmc}_{e} \not \in \Nmc_{i}(w)$.
											Towards a contradiction, suppose that, for every $v \in \Wmc$, $\delta \not \in \Phi^{e}_{v}$.
											Since, by definition, we have $\delta \not \in \Phi^{e}_{v}$ iff $\dnot \delta \in \Phi^{e}_{v}$, the previous step means that $\Wmc \subseteq \llbracket \dnot \delta \rrbracket^{\Mmc}_{e}$, and hence $\llbracket \delta \rrbracket^{\Mmc}_{e} = \emptyset$. Thus, $\Wmc \not \in \Nmc_{i}(w)$, contradicting the fact that $\Mmc$ contains the unit.
											$(ii)$ Let $k = l = 1$. Hence, there exists $\Box_{i} \gamma \in \Phi^{e}_{w}$ and  $\Diamond_{i} \delta \in \Phi^{e}_{w}$. We then reason similarly to the case for $\mathbf{L} = \mathbf{E}$.	\qedhere
										\end{enumerate}
									\end{proof}

									Given a completion set $\mathbf{T}$ for $\p$
									and $S_{n} \subseteq \mathbf{T}$,
									let $\Gamma^{x}_{n} = \{ \psi \mid n : \psi \in S_{n} \} \cup \{ C \mid n : C(x) \in S_{n} \}$.
									We say that a completion set $\mathbf{T}$ for $\p$ is \emph{$\Mmc$-compatible} if
									there exists a function $\pi$ from $\mathsf{L}_{\mathbf{T}}$ to $\Wmc$, and, for every $n \in \mathsf{L}_{\mathbf{T}}$, there exists a function $\pi_{n}$ from the set of variables occurring in $S_{n}$ to $\Delta_{\pi(n)}$, such that
									$\gamma \in \Gamma^{x}_{n}$ implies $\gamma \in \Phi^{\pi_{n}(x)}_{\pi(n)}$.
									We then require the following claim.
									
									\begin{claim}
										\label{cla:compatible}
										If a completion set $\mathbf{T}$ for $\p$ is $\Mmc$-compatible, then for every  $\LnALC$-rule $\mathsf{R}$ applicable to $\mathbf{T}$ there exists a completion set $\mathbf{T}'$ obtained from $\mathbf{T}$ by an application of $\mathsf{R}$ such that $\mathbf{T}'$ is $\Mmc$-compatible.
									\end{claim}
									\begin{proof}
										Given an $\Mmc$-compatible completion set $\mathbf{T}$ for $\p$ and a label $n \in \mathsf{L}_{\mathbf{T}}$, let $\pi$ and $\pi_{n}$ be the functions provided by the definition of $\Mmc$-compatibility.
										We need to consider each $\LnALC$-rule $\mathsf{R}$.
										For $\mathsf{R} \in \{ \mathsf{R}_{\land}, \mathsf{R}_{\lor}, \mathsf{R}_{\sqcap}, \mathsf{R}_{\sqcup}, \mathsf{R}_{\forall}, \mathsf{R}_{\exists}, \mathsf{R}_{=}, \mathsf{R}_{\neq} \}$, we proceed similarly to~\cite[Claim 15.14]{GabEtAl03}.
										Here we consider the case of $\mathsf{R}_{\mathbf{L}}$:
										Suppose that $\mathsf{R}_{\mathbf{L}}$ is applicable to $\mathbf{T}$.
										Let $\Box_{i} \gamma_{1} \in \Gamma^{x_{1}}_{n}, \ldots, \Box_{i} \gamma_{k} \in \Gamma^{x_{k}}_{n}, \Diamond_{i} \delta \in \Gamma^{y}_{n}$.
										Since $\mathbf{T}$ is $\Mmc$-compatible,
										we have that $\Box_{i}\gamma_{1} \in \Phi^{\pi_{n}(x_{1})}_{\pi(n)}, \ldots, \Box_{i} \gamma_{k} \in \Phi^{\pi_{n}(x_{k})}_{\pi(n)}$ and $\Diamond_{i} \delta \in \Phi^{\pi_{n}(y)}_{\pi(n)}$.
										Thus, by Claim~\ref{cla:truth}, there exists $v \in \Wmc$ such that:
										$\gamma_{1} \in \Phi^{\pi_{n}(x_{1})}_{v}, \ldots, \gamma_{k} \in \Phi^{\pi_{n}(x_{k})}_{v}$ and $\delta \in \Phi^{\pi_{n}(y)}_{v}$; or
										$\dnot \gamma_{j} \in \Phi^{\pi_{n}(x_{j})}_{v}$ and $\dnot \delta \in \Phi^{\pi_{n}(y)}_{v}$,
										for some $j\leq l$;
										where:
										for $\mathbf{L} = \mathbf{E}$, $k = l = 1$;
										for $\mathbf{L} = \mathbf{M}$, $k = 1$ and $l = 0$;
										for $\mathbf{L} = \mathbf{C}$, $k \geq 1$ and $l = k$;
										for $\mathbf{L} = \mathbf{N}$, $k = l = 1$ or $k = l = 0$.
										By applying the rule $\mathsf{R}_{\mathbf{L}}$ accordingly, one can obtain $\mathbf{T}'$ by adding
										$m : \gamma_1, \ldots, m : \gamma_k, m : \delta$, or $m : \dot{\lnot}\gamma_j, m : \dnot \delta $, for some $j\leq l$, to $\mathbf{T}$
										(recall that $m$ is fresh for $\mathbf{T}$ and $\gamma_{j}$ is either $\psi_{j} \in \forneg(\p)$ or $C_{j}(x_{j})$, with $C_{j} \in \conneg(\p)$, for $j = 1, \ldots, k$, and $\delta$ is either $\chi \in \forneg(\p)$ or $D(y)$, with $D \in \conneg(\p)$).
										By extending $\pi$ with $\pi(m) = v$, and $\pi_{m}$ with $\pi_{m}(x_{1}) = \pi_{n}(x_{1})$, \ldots, $\pi_{m}(x_{k}) = \pi_{n}(x_{k})$, $\pi_{m}(y) = \pi_{n}(y)$, we obtain that $\mathbf{T}'$ is $\Mmc$-compatible.
									\end{proof}
									
									To conclude, let $\mathbf{T}_{\p} = \{0 : \p, 0 : \top(x) \}$ be the initial completion set for $\p$.
									Define $\pi(0) = w_{\p}$ (where $\Mmc, w_{\p} \models \p$) and $\pi_{0}(x) = d$, for an arbitrary $d \in \Delta_{w_{\p}}$.
									Clearly, these functions ensure that $\mathbf{T}_{\p}$ is $\Mmc$-compatible.
									By Claim~\ref{cla:compatible}, we can apply the $\LnALC$-rules so that the obtained completion sets are $\Mmc$-compatible as well.
									From Theorem~\ref{thm:termination}, we have that the $\LnALC$ tableau algorithm eventually terminates, returning an $\LnALC$-complete completion set for $\p$ that is clash-free by construction.
								\end{proof}

								By Theorem~\ref{thm:termination}, we have that the non-deterministic $\LnALC$ tableau algorithm terminates after exponentially many steps in the size of the input formula. By Theorems~\ref{thm:soundness} and~\ref{thm:completeness}, such algorithm is sound and complete with respect to satisfiability in varying domain neighbourhood models. Thus, we obtain the following result.
								
								\begin{theorem}
									\label{thm:upperbound}
									The $\LnALC$ formula satisfiability problem on varying domain neighbourhood models is decidable in $\NExpTime$.
								\end{theorem}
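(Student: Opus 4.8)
The plan is to obtain this as an immediate corollary of Theorems~\ref{thm:termination}, \ref{thm:soundness}, and~\ref{thm:completeness}, by viewing the non-deterministic $\LnALC$ tableau algorithm as a decision procedure and bounding its resource use. First I would record the characterisation that drives the argument: combining Theorem~\ref{thm:soundness} with Theorem~\ref{thm:completeness}, a formula $\p$ is $\LnALC$ satisfiable if and only if there is a run of the algorithm, started on the initial completion set $\mathbf{T}_{\p}$, that produces an $\LnALC$-complete and clash-free completion set. Reading the branching rules $\mathsf{R}_{\lor}$, $\mathsf{R}_{\sqcup}$, and $\mathsf{R}_{\mathbf{L}}$ as non-deterministic guesses of a branch, this says precisely that the language of satisfiable formulas is the one accepted by the corresponding non-deterministic machine.

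Next I would bound the running time of a single run. By Theorem~\ref{thm:termination}, every run halts after at most $2^{p(|\fg(\p)|)}$ rule applications; since $|\fg(\p)|$ is linear in $|\p|$, this is a single exponential in the size of the input. It then suffices to verify that each step is feasible within the same bound. Any completion set produced along a run has size at most exponential in $|\p|$: by Claim~\ref{cla:termglobal} the number of labels is at most $2^{|\fg(\p)|}\cdot|\fg(\p)|$, and by Claim~\ref{cla:termlocal} the number of constraints per label is at most $2^{q(|\fg(\p)|)}$. Testing the application condition of a rule and carrying out the associated extension both reduce to inspecting and augmenting this set, which can be done in time polynomial in its size, hence exponential in $|\p|$.

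Finally I would combine the estimates: exponentially many steps, each executable in exponential time, still give an overall running time that is a single exponential in $|\p|$ on a non-deterministic Turing machine that accepts $\p$ exactly when $\p$ is $\LnALC$ satisfiable; membership in $\NExpTime$ follows. I do not anticipate a real obstacle, since the statement is essentially a packaging of the earlier results, but the point requiring care is that the per-step cost must genuinely stay exponential rather than compound across steps: this rests on the completion set never exceeding exponential size, which is exactly what Claims~\ref{cla:termlocal} and~\ref{cla:termglobal} secure, together with the fact that each non-deterministic choice at a branching rule is describable by a polynomially-sized guess.
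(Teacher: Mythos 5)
Your proposal is correct and follows exactly the paper's route: the paper derives Theorem~\ref{thm:upperbound} by combining Theorem~\ref{thm:termination} (exponentially many steps) with Theorems~\ref{thm:soundness} and~\ref{thm:completeness} (soundness and completeness of the non-deterministic tableau algorithm). Your additional remarks on bounding the per-step cost via Claims~\ref{cla:termlocal} and~\ref{cla:termglobal} only make explicit what the paper leaves implicit.
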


								To conclude this section, we observe that as an immediate consequence of the above results we also obtain 
								a (constructive) proof of the following kind of \emph{exponential model property}.
								\begin{corollary}
									For $\mathbf{L} \in \{\mathbf{E},\mathbf{M},\mathbf{N}\}$
									(respectively, $\mathbf{L} = \mathbf{C}$),
									every $\LnALC$ satisfiable formula $\p$ has a model 
									with at most $p(|\fg(\p)|)$ (respectively, at most $2^{p(|\fg(\p)|)}$) worlds, each of them having a domain with at most 
									$2^{q(|\fg(\p)|)}$ elements, where $p$ and $q$ are polynomial functions.
								\end{corollary}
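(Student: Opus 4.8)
The plan is to stitch together the three theorems already proved: Theorem~\ref{thm:completeness} supplies a finished completion set, the model construction inside the proof of Theorem~\ref{thm:soundness} converts it into a model of the appropriate class, and the counting established for Theorem~\ref{thm:termination} controls the size of that model. The corollary is thus essentially a reformulation of the termination bounds in terms of model parameters.

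First I would invoke Theorem~\ref{thm:completeness}: since $\p$ is $\LnALC$ satisfiable, running the $\LnALC$ tableau algorithm on the initial completion set $\mathbf{T}_{\p}$ produces an $\LnALC$-complete and clash-free completion set $\mathbf{T}$ for $\p$. I would then feed $\mathbf{T}$ into the construction from the proof of Theorem~\ref{thm:soundness}, obtaining a varying domain neighbourhood model $\Mmc = (\Fmc, \Imc)$ based on a frame of the class required by $\mathbf{L}$ (supplemented for $\mathbf{M}$, closed under intersection for $\mathbf{C}$, containing the unit for $\mathbf{N}$, as verified there) with $\Mmc, 0 \models \p$. By that very construction, the set of worlds is exactly $\Wmc = \mathsf{L}_{\mathbf{T}}$, and for each $n \in \Wmc$ the domain is $\Delta_{n} = \{ x \in \NV \mid x \ \text{occurs in} \ S_{n} \}$. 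It therefore suffices to read off bounds on $|\mathsf{L}_{\mathbf{T}}|$ and on each $|\Delta_{n}|$ from the analysis underlying termination.

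For the number of worlds I would appeal directly to Claim~\ref{cla:termglobal}: for $\mathbf{L} \in \{\mathbf{E}, \mathbf{M}, \mathbf{N}\}$ we have $|\Wmc| = |\mathsf{L}_{\mathbf{T}}| \leq |\fg(\p)|^{2}$, which is polynomial in $|\fg(\p)|$, whereas for $\mathbf{L} = \mathbf{C}$ we have $|\mathsf{L}_{\mathbf{T}}| \leq 2^{|\fg(\p)|} \cdot |\fg(\p)|$, which is bounded by $2^{p(|\fg(\p)|)}$ for a suitable polynomial $p$. For the domain sizes I would use Claim~\ref{cla:termlocal}: the domain $\Delta_{n}$ consists of the variables occurring in $S_{n}$, and since by the standing convention each such variable $x$ contributes at least the constraint $n : \top(x)$, the cardinality of $\Delta_{n}$ is bounded by the number of $n$-labelled constraints in $\mathbf{T}$, hence by $2^{q(|\fg(\p)|)}$ for a polynomial $q$.

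I do not expect a genuine obstacle, as the statement is bookkeeping on top of the termination analysis. The only points deserving care are the passage from the count of $n$-labelled constraints to the domain cardinality (mediated by the $n : \top(x)$ convention, which ensures every domain element is witnessed by a constraint), and the observation that the $\mathbf{C}$-bound $2^{|\fg(\p)|} \cdot |\fg(\p)|$ is absorbed into a single exponential $2^{p(|\fg(\p)|)}$; both are routine.
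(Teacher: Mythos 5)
Your proposal matches the paper's own proof: it invokes Theorem~\ref{thm:completeness} to obtain a complete clash-free completion set, reads off the model from the construction in the proof of Theorem~\ref{thm:soundness} (with $\Wmc = \mathsf{L}_{\mathbf{T}}$ and $\Delta_n$ the variables occurring in $S_n$), and bounds the sizes via Claims~\ref{cla:termglobal} and~\ref{cla:termlocal}. Your extra remark justifying the passage from the count of $n$-labelled constraints to $|\Delta_n|$ via the $n:\top(x)$ convention is a small but welcome addition of detail that the paper leaves implicit.
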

								\begin{proof}
									By Theorem \ref{thm:completeness}, if $\p$ is $\LnALC$ satisfiable, then 
									there is a $\LnALC$-complete and clash-free completion set $\mathbf{T}$ for it.
									Then by Theorem \ref{thm:soundness},
									there exists a model 
									$\Mmc = (\Wmc, \{ \Nmc_{i} \}_{i \in I}, \Imc)$
									for $\p$ where $\Wmc =  \mathsf{L}_{\mathbf{T}}$
									and for each $n\in\Wmc$, $\Delta_{n} = \{ x \in \mathsf{N_{V}} \mid x \ \text{occurs in} \ S_{n} \}$.
									By Theorem~\ref{thm:termination}, Claim~\ref{cla:termglobal}, it follows
									$|\Wmc| \leq |\fg(\p)|^2$ for $\mathbf{L} \in \{\mathbf{E}, \mathbf{M}, \mathbf{N}\}$,
									and 
									$|\Wmc| \leq 2^{|\fg(\p)|} \cdot |\fg(\p)|$ for $\mathbf{L} = \mathbf{C}$,
									finally by Theorem~\ref{thm:termination}, Claim~\ref{cla:termlocal}, 
									for each $n\in\Wmc$, $|\Delta_n|$ does not exceed $2^{q(|\fg(\p)|)}$,
									where $p$ and $q$ are polynomial functions.
								\end{proof}

\section{Reasoning in Fragments without Modalised Concepts}

\newcommand{\MLn}{\ensuremath{\smash{\mathit{ML}^{n}}}\xspace}
\newcommand{\Nn}{\ensuremath{\smash{\mathbf{N}^{n}}}\xspace}

An \emph{$\MLnALCg$ formula} is defined similarly to the $\MLnALC$ case,
by disallowing modalised concepts.
Given
$\mathbf{L} \in \mathsf{Log}$,
the \emph{$\LnALCg$ formula satisfiability problem on constant domain neighbourhood models} is the $\MLnALCg$ formula satisfiability problem on constant domain neighbourhood models based on neighbourhood frames in the respective class for $\mathbf{L}$ (cf. Section~\ref{sec:prelim}).
%
An \emph{$\MLn$ formula}, instead, is defined analogously to $\MLnALCg$, except that we built it from the standard propositional (rather than $\ALC$) language over a countable set of \emph{propositional letters} $\mathsf{N_{P}}$.
The semantics of
$\MLn$ formulas
is given in terms of \emph{propositional 
neighbourhood models} (or simply \emph{models}) $\Mmc^{\sf P} = (\Wmc, \{ \Nmc_{i} \}_{i \in I}, \Vmc)$,
where $(\Wmc, \{ \Nmc_{i} \}_{i \in I})$ is a neighbourhood frame,
with $I = \{ 1, \ldots, n \}$ in the following,
and $\Vmc: \NPr \rightarrow 2^{\Wmc}$ is a function 
mapping propositional letters
to
sets of worlds
(see~\cite{Che,Var2}).
The \emph{$\ensuremath{\smash{\mathbf{L}^{n}}}$ formula satisfiability problem}, is the $\MLn$ formula satisfiability problem on propositional neighbourhood models based on neighbourhood frames in the respective class for $\mathbf{L}$.
A propositional neighbourhood model based on a neighbourhood frame in the respective class for $\mathbf{L}$ is called \emph{$\mathbf{L}^{n}$ model}.


In~\citeauthor{DL19}~\cite{DL19}, it is shown that 
$\EALCg$  and $\MALCg$ formula satisfiability problems on constant domain neighbourhood models are 
$\ExpTime$-complete.
We now show tight complexity results 
for $\CALCg$  and $\NALCg$,
using again the notion of 
a propositional abstraction of a formula 
(as in, e.g.,~\cite{Baader:2012:LOD:2287718.2287721}).
Here, one can separate the satisfiability test into two parts, 
one for the description logic dimension and 
one for the 
modal dimension.
%
The 
\emph{propositional  abstraction} $\prop{\varphi}$ of an
$\MLnALCg$
formula $\varphi$ is  
the result of replacing each $\ALC$
CI
in $\varphi$ by 
a propositional letter $p$, so that there is a $1:1$ relationship 
between the $\ALC$
CI
$\elaxiom$ occurring in $\varphi$ and the 
propositional letters $p_{\elaxiom}$ used for the abstraction.  
We set $\NPr(\varphi) = \{p_{\elaxiom}\in\NPr\mid \elaxiom \text{ is an \ALC
CI
in }
\varphi \}$.
Given an
$\MLnALCg$
formula $\varphi$, we say that a propositional neighbourhood model 
$\propmodel = (\Wmc, \{ \Nmc_{i} \}_{i \in I}, \Vmc)$
of $\prop{\varphi}$
is \emph{$\consistent{\varphi}$}
if, for all $w\in \Wmc$,
the following \ALC formula is satisfiable $$\textstyle\bigwedge_{p_{\elaxiom}\in \NPr(w)} \ {\elaxiom} \ \wedge \
\bigwedge_{p_{\elaxiom}\in \overline{\NPr(w)}}\ \neg {\elaxiom},$$
where $\NPr(w) = \{p_{\elaxiom}\in \NPr(\varphi) \mid w\in \Vmc(p_{\elaxiom})\}$
and $\overline{\NPr(w)}=\NPr(\varphi)\setminus\NPr(w)$. 
We now formalise the connection between 
$\MLnALCg$
formulas and their propositional abstractions 
with the following lemma, where $\mathbf{L} \in \{ \mathbf{C}, \mathbf{N} \}$, obtained by adapting the proof of~\citeauthor{DL19}~\cite[Lemma~1]{DL19}.


\begin{restatable}{lemma}{Lemmaprop}\label{lem:prop}
A formula $\varphi$ is $\LnALCg$ satisfiable
on constant domain neighbourhood models
iff
$\prop{\varphi}$ is satisfied in a $\varphi$-consistent $\mathbf{L}^{n}$ model.  
\end{restatable}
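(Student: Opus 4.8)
The plan is to fix the frame and push the whole argument to the Boolean/modal ``formula layer'', exploiting that an $\MLnALCg$ formula has \emph{no modalised concepts}: the modal operators apply only to formulas, so the only atoms of the propositional abstraction are the \ALC concept inclusions occurring in $\varphi$. The core is a \emph{transfer lemma}: given a neighbourhood frame $(\Wmc,\{\Nmc_i\}_{i\in I})$, a constant domain model $\Mmc$ on it, and a propositional model $\propmodel$ on the \emph{same} frame whose valuation satisfies $w\in\Vmc(p_\elaxiom)$ iff $\Mmc,w\models\elaxiom$ for every \ALC CI $\elaxiom$ in $\varphi$, one has $\Mmc,w\models\psi$ iff $\propmodel,w\models\prop{\psi}$, for every subformula $\psi$ of $\varphi$ and every $w\in\Wmc$. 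I would prove this by induction on $\psi$: the base case $\psi=\elaxiom$ holds by the choice of $\Vmc$, recalling that the abstraction replaces $\elaxiom$ by the atom $p_\elaxiom$; the Boolean cases are immediate; and for $\psi=\B_i\chi$ (and dually $\D_i\chi$) the induction hypothesis applied at every world gives $\llbracket\chi\rrbracket^{\Mmc}=\llbracket\prop{\chi}\rrbracket^{\propmodel}$, so membership of this truth set in $\Nmc_i(w)$ transfers verbatim. Since the frame is never altered, its closure property ($\mathbf{C}$: closure under intersection; $\mathbf{N}$: containing the unit) is preserved in both directions, which is why the argument is uniform for $\mathbf{L}\in\{\mathbf{C},\mathbf{N}\}$.

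For the left-to-right direction, I would start from a constant domain $\LnALCg$ model $\Mmc=(\Wmc,\{\Nmc_i\}_{i\in I},\Imc)$ with $\Mmc,w_\varphi\models\varphi$, keep the frame, and define $\Vmc$ by $w\in\Vmc(p_\elaxiom)$ iff $\Imc_w\models\elaxiom$. The transfer lemma immediately yields $\propmodel,w_\varphi\models\prop{\varphi}$, and $\propmodel$ is $\consistent{\varphi}$ because at each $w$ the single interpretation $\Imc_w$ already witnesses satisfiability of $\bigwedge_{p_\elaxiom\in\NPr(w)}\elaxiom\wedge\bigwedge_{p_\elaxiom\in\overline{\NPr(w)}}\neg\elaxiom$: by the definition of $\Vmc$, $\Imc_w$ makes the positive conjuncts true and the negative ones false.

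For the converse, let $\propmodel=(\Wmc,\{\Nmc_i\}_{i\in I},\Vmc)$ be a $\consistent{\varphi}$ model with $\propmodel,w_\varphi\models\prop{\varphi}$. For each $w$, $\varphi$-consistency yields an \ALC interpretation $\Jmc_w$ satisfying the above conjunction, hence with $\Jmc_w\models\elaxiom$ iff $w\in\Vmc(p_\elaxiom)$. The one genuine obstacle is the \emph{constant domain} requirement, since the $\Jmc_w$ a priori live over different domains. I would resolve this by homogenising to a common countably infinite domain: as the extension of an \ALC concept in a disjoint union of interpretations is the union of its extensions in the components, a CI holds in the union iff it holds in every component, and is refuted iff it is refuted in some component. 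Thus, taking countably many disjoint copies of $\Jmc_w$ produces a model of cardinality $\aleph_0$ satisfying exactly the same CIs and refuting exactly the same ones, and after renaming we may take its domain to be $\Delta=\Nbl$ uniformly for all $w$. Setting $\Mmc=(\Wmc,\{\Nmc_i\}_{i\in I},\Imc)$ with this shared $\Delta$ and the resulting $\Imc_w$ gives a constant domain model on the original frame with $\Imc_w\models\elaxiom$ iff $w\in\Vmc(p_\elaxiom)$; the transfer lemma then yields $\Mmc,w_\varphi\models\varphi$. I expect this disjoint-union step to require the most care, since one must verify that both satisfaction and refutation of every CI occurring in $\varphi$ survive the passage to the common domain.
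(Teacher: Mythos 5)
Your proposal is correct and follows essentially the same route as the paper's (the paper defers the proof to an adaptation of \cite[Lemma~1]{DL19}, which proceeds exactly by this kind of propositional abstraction with a frame-preserving transfer argument and per-world $\ALC$ witnesses): the absence of modalised concepts is precisely what makes the truth-set identity $\llbracket\chi\rrbracket^{\Mmc}=\llbracket\prop{\chi}\rrbracket^{\propmodel}$ go through, and since there are no rigid symbols linking interpretations across worlds, the constant-domain requirement can indeed be discharged by homogenising the per-world interpretations. The only point worth making explicit in your disjoint-union step is that each $\Jmc_{w}$ may first be assumed countable (by downward L\"owenheim--Skolem, or the finite model property of $\ALC$), so that countably many disjoint copies really yield domains of cardinality $\aleph_0$ that can all be renamed to a common $\Delta$.
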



We
assume that the 
primitive
connectives used to build 
propositional formulas are $\neg$ 
and $\wedge$ ($\vee$ is expressed using 
$\neg$ 
and $\wedge$),
and
we
denote by  ${\sf sub}(\prop{\varphi})$ the set 
of subformulas of $\prop{\varphi}$ closed under single negation.  
A \emph{valuation} for a propositional
$\MLn$
formula $\prop{\varphi}$   
is a function $\nu: {\sf sub} (\prop{\varphi})\rightarrow \{0,1\}$ that 
satisfies the following conditions:
(1) for all $\neg\psi\in {\sf sub} (\prop{\varphi})$,
$\nu(\psi)=1$ iff $\nu(\neg\psi)=0$;
(2) for all $\psi_1\wedge \psi_2\in {\sf sub} (\prop{\varphi})$,
$\nu(\psi_1\wedge \psi_2) = 1$ iff $\nu(\psi_1) = 1$
and $\nu(\psi_2) = 1$; 
and (3) $\nu(\prop{\varphi})=1$. 
We say that a valuation for $\prop{\varphi}$ 
is \emph{$\varphi$-consistent} if any
propositional
neighbourhood
model of the form $(\{w\}, \{ \propneigh_{i} \}_{i \in I}, \propassign)$ satisfying
$w\in \propassign(p_\elaxiom)$ iff $\nu(p_\elaxiom)=1$, for all $p_\elaxiom\in \NPr(\varphi)$, 
 is $\varphi$-consistent.
We now establish that satisfiability of 
$\prop{\varphi}$ in a $\varphi$-consistent $\C^{n}$ (respectively, $\Nn$) model is characterized 
by the existence of a  $\varphi$-consistent valuation 
satisfying the property described in Lemma~\ref{lem:proplemma2} 
(respectively, Lemma~\ref{lem:proplemma}).

\begin{restatable}{lemma}{Lemmapropsec}\label{lem:proplemma2}
A formula $\prop{\varphi}$ is satisfied in a $\varphi$-consistent $\C^{n}$ 
model
iff
there is   a $\varphi$-consistent valuation \valuation 
for $\prop{\varphi}$ such that 
if $\B_i\psi_1, \dots, \B_i\psi_k$ are in 
${\sf sub}(\prop{\varphi})$, $\valuation(\B_i\psi_j)=1$ for all $1\leq j < k$, and 
$\valuation(\B_i\psi_k)=0$, then either $(\bigwedge^{k-1}_{j=1}\psi_j\wedge\neg\psi_k)$ or 
$(\neg\psi_j\wedge\psi_k)$ for some $1\leq j < k$
is satisfied in a $\varphi$-consistent $\C^{n}$ 
 model. 
\end{restatable}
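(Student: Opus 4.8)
The plan is to prove the two directions of the equivalence separately, exploiting that the two candidate formulas $\bigwedge_{j=1}^{k-1}\psi_j\wedge\neg\psi_k$ and $\neg\psi_j\wedge\psi_k$ are exactly the two shapes of branch produced by the tableau rule $\mathsf{R}_{\mathbf{C}}$ once the false box $\B_i\psi_k$ is read as the diamond $\D_i\neg\psi_k$ (recall $\neg\B_i\psi_k\equiv\D_i\neg\psi_k$). The left-to-right direction reads a valuation off a satisfying world and verifies the box property via closure under intersection; the right-to-left direction assembles a model out of the witnessing models furnished by the property.

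For the left-to-right direction, suppose $\prop{\varphi}$ is satisfied at a world $w_0$ of a $\varphi$-consistent $\C^{n}$ model $\propmodel=(\Wmc,\{\Nmc_i\}_{i\in I},\Vmc)$. I would set $\valuation(\chi)=1$ iff $\propmodel,w_0\models\chi$, for $\chi\in{\sf sub}(\prop{\varphi})$. This is a valuation, since truth at a fixed world commutes with $\neg$ and $\wedge$ and $w_0\models\prop{\varphi}$, and it is $\varphi$-consistent because the single-world model realising $\valuation$ imposes exactly the $\ALC$ constraints already satisfiable at $w_0$. To check the box property, assume $\valuation(\B_i\psi_j)=1$ for $j<k$ and $\valuation(\B_i\psi_k)=0$, i.e.\ $\llbracket\psi_j\rrbracket^{\propmodel}\in\Nmc_i(w_0)$ for $j<k$ and $\llbracket\psi_k\rrbracket^{\propmodel}\notin\Nmc_i(w_0)$. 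In the principal case of at least one true box, closure under intersection gives $A:=\bigcap_{j<k}\llbracket\psi_j\rrbracket^{\propmodel}\in\Nmc_i(w_0)$, whereas $B:=\llbracket\psi_k\rrbracket^{\propmodel}\notin\Nmc_i(w_0)$, so $A\neq B$ and some world $v$ lies in their symmetric difference. If $v\in A\setminus B$ then $v\models\bigwedge_{j<k}\psi_j\wedge\neg\psi_k$, and if $v\in B\setminus A$ then $v\models\neg\psi_j\wedge\psi_k$ for some $j<k$; either way the required formula is satisfied at $v$ in the same $\varphi$-consistent $\C^{n}$ model.

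For the right-to-left direction, let $\valuation$ be a $\varphi$-consistent valuation satisfying the box property. There are finitely many \emph{box patterns}, i.e.\ pairs consisting of a nonempty set of true boxes $\B_i\psi_{j_1},\dots,\B_i\psi_{j_{k-1}}$ and one false box $\B_i\psi_k$ over ${\sf sub}(\prop{\varphi})$, and for each the property supplies a $\varphi$-consistent $\C^{n}$ model with a world witnessing one of the two disjuncts. I would take a fresh root $w_0$ carrying $\valuation$ and form the disjoint union of $w_0$ with all these witnessing models, writing $\Wmc$ for the resulting world set and $\Wmc_a$ for the worlds contributed by a component $\Mmc_a$, with each propositional letter set as in its source component and as dictated by $\valuation$ at $w_0$. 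At a world $w\in\Wmc_a$ I keep the \emph{relativised} neighbourhood $\Nmc_i(w)=\{\,X\subseteq\Wmc\mid X\cap\Wmc_a\in\Nmc_i^{\Mmc_a}(w)\,\}$, which is again closed under intersection and, by a routine simultaneous induction establishing $\llbracket\chi\rrbracket^{\Mmc}\cap\Wmc_a=\llbracket\chi\rrbracket^{\Mmc_a}$, validates exactly the subformulas at $w$ that $\Mmc_a$ does. At the root I put $\Nmc_i(w_0)$ equal to the closure under intersection of $\{\,\llbracket\psi\rrbracket^{\Mmc}\mid\B_i\psi\in{\sf sub}(\prop{\varphi}),\ \valuation(\B_i\psi)=1\,\}$.

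I expect the crux to be verifying that $w_0$ evaluates every modal subformula in accordance with $\valuation$. Each true box is a generator of $\Nmc_i(w_0)$, so $\B_i\psi$ holds at $w_0$ whenever $\valuation(\B_i\psi)=1$; the delicate point is the converse. If $\valuation(\B_i\psi_k)=0$ I must show $\llbracket\psi_k\rrbracket^{\Mmc}\notin\Nmc_i(w_0)$, i.e.\ that $\llbracket\psi_k\rrbracket^{\Mmc}$ is \emph{not} an intersection $\bigcap_{l}\llbracket\psi_{j_l}\rrbracket^{\Mmc}$ of true-box truth sets. This is exactly what the box property delivers: for each such nonempty family it provides a world of some witnessing model, hence (by the truth-preservation identity above) a world of $\Mmc$, at which $\bigcap_l\llbracket\psi_{j_l}\rrbracket^{\Mmc}$ and $\llbracket\psi_k\rrbracket^{\Mmc}$ disagree, so the two sets cannot coincide. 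A truth lemma by induction on the weight of subformulas then yields $\Mmc,w_0\models\prop{\varphi}$, and $\varphi$-consistency of $\Mmc$ holds worldwise: at $w_0$ by $\varphi$-consistency of $\valuation$, and elsewhere by $\varphi$-consistency of the witnessing models. The main obstacle is precisely this model-combination bookkeeping, namely keeping the relativised neighbourhoods truth-preserving for nested modalities while ensuring the box property bars every false box from the intersection-closure defining $\Nmc_i(w_0)$.
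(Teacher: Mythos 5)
Your argument is correct and follows essentially the same route as the paper's: the ($\Rightarrow$) direction reads the valuation off the satisfying world and uses closure under intersection together with a symmetric-difference argument, and the ($\Leftarrow$) direction glues the witnessing models to a fresh root whose neighbourhoods are generated by the true-box truth sets and closed under intersection. The one wrinkle is that defining $\Nmc_i(w_0)$ directly from the truth sets $\llbracket\psi\rrbracket^{\Mmc}$ of the model being built is circular when $\psi$ is itself modal; the paper sidesteps this by first defining an auxiliary function $J$ on ${\sf sub}(\prop{\varphi})$ from the component models and the valuation, generating the neighbourhoods from $J$, and only afterwards proving that $J$ agrees with the actual truth sets --- the same repair makes your construction precise.
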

\begin{proof}
($\Rightarrow$) Suppose that $\prop{\varphi}$ is satisfied in a world $w$ of a $\varphi$-consistent $\C^{n}$ model 
 $\propmodel = (\propdomain, \{ \propneigh_{i} \}_{i \in I}, \propassign)$. That is, 
 $\propmodel, w\models \prop{\varphi}$. We define a $\varphi$-consistent valuation for 
 $\prop{\varphi}$
 by setting $\nu(\psi)=1$ if $\propmodel, w\models \psi$ and $\nu(\psi) = 0$
 if  $\propmodel, w\not\models \psi$. 
 It is easy to check that $\nu$ is indeed a 
 $\varphi$-consistent valuation (given that $\propmodel$ is a  
 $\varphi$-consistent $\C^{n}$ model). Assume  $\B_i\psi_1, \dots, \B_i\psi_k$ are in 
${\sf sub}(\prop{\varphi})$, $\valuation(\B_i\psi_j)=1$ for all $1\leq j < k$, and 
$\valuation(\B_i\psi_k)=0$.
Then $\propmodel,w\models \B_i\psi_j$ for all $1\leq j < k$,
 and $\propmodel,w\not\models \B_i\psi_k$.
By definition, $(\B_i \psi_1 \wedge \ldots \wedge \B_i \psi_{k-1})\rightarrow \B_i (\psi_1\wedge \ldots \wedge \psi_{k-1})$ holds in $\C^{n}$ models.
So $\propmodel,w\models \B_i (\psi_1\wedge \ldots \wedge \psi_{k-1})$ 
 and $\propmodel,w\not\models \B_i\psi_k$.
This means that $\valuation(\B_i(\bigwedge^{k-1}_{j=1}\psi_j))=1$
while $\valuation(\B_i\psi_k)=0$.
  By definition, 
 $\propassign(\bigwedge^{k-1}_{j=1}\psi_j)\in \propneigh_i(w)$   and 
  $\propassign(\psi_k)\not\in \propneigh_i(w)$.
So,
   $\propassign(\bigwedge^{k-1}_{j=1}\psi_j)\neq \propassign(\psi_k)$.
Then, 
there 
   is a world $u$ in the symmetrical difference of these sets 
   such that $\propmodel,u\models (\bigwedge^{k-1}_{j=1}\psi_j\wedge\neg\psi_k)\vee(\neg(\bigwedge^{k-1}_{j=1}\psi_j)\wedge\psi_k)$. 
   

($\Leftarrow$) Suppose there is a $\varphi$-consistent valuation $\nu$ for $\prop{\varphi}$ such that 
if $\B_i\psi_1, \dots, \B_i\psi_k$ are in 
${\sf sub}(\prop{\varphi})$, $\valuation(\B_i\psi_j)=1$ for all $1\leq j < k$, and 
$\valuation(\B_i\psi_k)=0$, then 
there is a $\varphi$-consistent $\C^{n}$ model $$\propmodel_{\bigwedge^{k-1}_{j=1}\psi_j,\psi_k}=(\propdomain_{\bigwedge^{k-1}_{j=1}\psi_j,\psi_k},\{ \propneigh_{{\bigwedge^{k-1}_{j=1}\psi_j,\psi_k}_{i}} \}_{i \in I},\propassign_{\bigwedge^{k-1}_{j=1}\psi_j,\psi_k})$$ and a world 
$w_{\bigwedge^{k-1}_{j=1}\psi_j,\psi_k}\in \propdomain_{\bigwedge^{k-1}_{j=1}\psi_j,\psi_k}$ such that 
$$\propmodel_{\bigwedge^{k-1}_{j=1}\psi_j,\psi_k}, w_{\bigwedge^{k-1}_{j=1}\psi_j,\psi_k}\models ((\bigwedge^{k-1}_{j=1}\psi_j)\wedge\neg\psi_k)\vee(\neg(\bigwedge^{k-1}_{j=1}\psi_j)\wedge\psi_k).$$ 

Let $\propmodel_1,\ldots,\propmodel_m$ be an enumeration of the models 
$\propmodel_{\bigwedge^{k-1}_{j=1}\psi_j,\psi_k}$, as above.  That is, we take one model $\propmodel_{\bigwedge^{k-1}_{l=1}\psi_l,\psi_k}$ for each 
pair $j=\bigwedge^{k-1}_{l=1}\psi_l,\psi_k$ 
where $\propmodel_j = (\propdomain_j, \{ \propneigh_{j_{i}} \}_{i \in I},\propassign_j)$, 
and let $w_1,\ldots,w_m$ be an enumeration of the worlds
 $w_{\bigwedge^{k-1}_{l=1}\psi_l,\psi_k}$, 
with $j=\bigwedge^{k-1}_{l=1}\psi_l,\psi_k$ and  $w_j\in \propdomain_j$. We assume without loss of generality that $\propdomain_j\cap \propdomain_k=\emptyset$ 
for $j\neq k$. 

In the following, we define a $\varphi$-consistent $\C^{n}$ model   $\propmodel = (\propdomain,\{ \propneigh_{i} \}_{i \in I}, \propassign)$ 
for $\prop{\varphi}$. 
Intuitively, we construct $\propmodel$ by taking the union of each 
$\propmodel_j$ with the addition of a new world $w$ that 
will satisfy $\prop{\varphi}$. 
We define $\propdomain$ as $\bigcup_{1\leq j\leq n}\propdomain_j\cup \{w\}$, 
where $w$ is fresh.
Before defining $\propneigh_{i}$ and $\propassign$, we define the function $J: {\sf sub}(\prop{\varphi})\rightarrow 2^{\Wmc}$
with $J(\psi)=\bigcup_{0\leq j \leq m} \Vmc_j(\psi)$ for all $\psi\in {\sf sub}(\prop{\varphi})$, where 
$\Vmc_0: {\sf sub}(\varphi)\rightarrow  2^{\{w\}}$ is the function
that assigns $\psi$ to $\{w\}$, if $\nu(\psi)=1$, 
and to $\emptyset$, otherwise ($\Vmc_j$, for $1\leq j\leq m$, is as above).
By construction, we have that $J(\neg \psi)=\propdomain\setminus J(\psi)$
and $J(\psi_1\wedge \psi_2)=J(\psi_1)\cap J(\psi_2)$. 
We define the assignment $\propassign$ as the function 
$\propassign: \NPr(\varphi)\rightarrow 2^{\Wmc}$ satisfying 
 $\propassign(p_\elaxiom)=J(p_\elaxiom)$ for all $p_\elaxiom\in \NPr(\varphi)$. 
 
It remains to define $\propneigh_i$, for
$i \in I$.
For $u\in \propdomain_j$ we first put $\alpha \subseteq W$ in $\propneigh_i(u)$ 
precisely when 
$\propmodel_j,u \models \B_i \psi_\alpha$ and $\alpha = J(\psi_\alpha)$
for some $\B_i\psi_\alpha \in {\sf sub}(\varphi)$.
Then, we close $\propneigh_i$ under intersection so that $\propmodel$ is a $\C^{n}$ model. The next two claims establish that $\propneigh_i$ is as expected.
\begin{claim}
If $\beta \in \Nmc_i(u)$ and $\beta = J(\psi)$ for some $\B_i\psi\in{\sf sub}(\prop{\varphi})$,
then $\propmodel_j,u\models\B_i\psi$.
\end{claim}
Indeed, since $\beta = J(\psi)\in \Nmc_i(u)$, 
we must have that $\propmodel_j,u\models \B_i\psi_{1,\beta}$, \ldots, 
$\propmodel_j,u\models \B_i\psi_{m,\beta}$ and  $\beta = \bigcap^{m}_{l=1} J(\psi_{l,\beta})$ for 
some $\B_i\psi_{1,\beta}, \ldots, \B_i\psi_{m,\beta} \in{\sf sub}(\prop{\varphi})$. %
Since $\propneigh_i$ is closed under intersection,
in fact, we have that $\propmodel_j,u \models \B_i (\bigwedge^m_{l=1}\psi_{l,\beta})$.
But since
$J(\psi)=\bigcap^{m}_{i=1} J(\psi_{i,\beta})$, we also have $\propassign_j(\psi)=\bigcap^{m}_{l=1}\propassign_j(\psi_{l,\beta})$ 
(recall that $\propdomain_j \cap \propdomain_k = \emptyset$ for  
$k \neq j$), so $\propmodel_j,u\models \B_i\psi$ iff 
$\propmodel_j,u \models \B_i (\bigwedge^m_{l=1}\psi_{l,\beta})$.
It follows that $\propmodel_j,u\models \B_i\psi$.

\medskip

Regarding the fresh 
world $w$ introduced above in $\propdomain$,
we first put $\alpha \subseteq \Wmc$ in $\propneigh_i(w)$   precisely 
when $\nu(\B_i\psi_\alpha) = 1$ and $\alpha = J(\psi_\alpha)$ for some 
$\B_i\psi_\alpha \in {\sf sub}(\prop{\varphi})$.
Then, we again close $\propneigh_i$ under intersection so that $\propmodel$ is a $\C^{n}$ model.
\begin{claim}
If 
$\beta \in \propneigh_i(w)$ and $\beta = J(\bigwedge^{k-1}_{l=1}\psi_l)$ for some $\B_i \psi_1, \ldots, \B_i \psi_{k-1} \in{\sf sub}(\prop{\varphi})$ 
then  $\valuation(\B_i\psi_l)=1$ for all $1\leq l < k$.
\end{claim}
Indeed, since $\beta = J(\bigwedge^{k-1}_{l=1}\psi_l)\in \propneigh_i(w)$ 
we must have that $\nu(\B_i\psi_{1,\beta})=1, \ldots, \nu(\B_i\psi_{m,\beta})=1$ and 
$\beta = \bigcap^{m}_{i=1} J(\psi_{i,\beta})$ for some 
$\B_i\psi_{1,\beta}, \ldots, \B_i\psi_{m,\beta} \in{\sf sub}(\prop{\varphi})$. 
Suppose now that $\nu(\bigwedge^{k-1}_{l=1}\psi_l) = 0$. Then, by assumption, there exists a structure
$\propmodel_j = (\propdomain_j, \{ \propneigh_{j_i} \}_{i \in I}, \propassign_j)$ and a world $w_j
\in \propdomain_j$ such that $\propmodel_j,w_j \models (\bigwedge^{k-1}_{l=1}\psi_{l,\beta} \wedge \neg(\bigwedge^{k-1}_{l=1}\psi_l))\vee(\neg(\bigwedge^{k-1}_{l=1}\psi_{l,\beta}) \wedge (\bigwedge^{k-1}_{l=1}\psi_l))$. 
It follows that $\propassign_j(\bigwedge^{k-1}_{l=1}\psi_{l,\beta})\neq \propassign_j(\bigwedge^{k-1}_{l=1}\psi_l)$. 
Consequently $J(\bigwedge^{k-1}_{l=1}\psi_{l,\beta})\neq J(\bigwedge^{k-1}_{l=1}\psi_l)$, which is a contradiction.  

\medskip

We now show by induction on the structure of formulas 
that $\propassign$ and $J$ agree on ${\sf sub}(\prop{\varphi})$. 
This holds by construction for atomic propositions. It is easy to deal 
with propositional connectives, since we know that $J(\neg \psi)=\propdomain\setminus J(\neg \psi)$
and  $J(\psi_1\wedge \psi_2)=J(\psi_1)\cap J(\psi_2)$ 
and similarly for $\propassign$. Assume inductively that $\propassign(\psi) = J(\psi)$.
Suppose first that $u\in J(\B_i\psi)$. Then, either $u=w$ and $\nu(\B_i\psi)=1$
or $u\in \propdomain_j$ and $\propmodel_j,u\models\B_i\psi$. In either case 
we have that $J(\psi)\in \propneigh_i(u)$. Since 
$\propassign(\psi) = J(\psi)$, it follows that $\propmodel,u\models \B_i\psi$, 
that is, $u\in \propassign(\B_i\psi)$. Suppose now that $u\in \propassign(\B_i\psi)$, 
that is, $\propmodel,u\models \B_i\psi$, or, equivalently, $\propassign(\psi)\in \propneigh_i(u)$.
Since $\propassign(\psi) = J(\psi)$ it follows that either $u=w$ and 
$\nu(\B_i\psi)=1$ or $u\in \propdomain_j$ and $\propmodel_j,u\models \B_i\psi$. 
In either case we have that $u\in J(\B_i\psi)$. 

Since $\nu(\prop{\varphi})=1$, we have that $w\in J(\prop{\varphi})$, 
and consequently $w\in \propassign(\prop{\varphi})$. That 
is, $\propmodel,w\models\prop{\varphi}$. 
The fact that $\propmodel$ is a $\C^{n}$ model follows 
from the definition of $\propneigh_i$.
The fact that $\propmodel$ is a $\varphi$-consistent  model follows from 
the fact that $\nu$, used to construct the assignment 
related to $w$, is $\varphi$-consistent and 
the models $\propmodel_1,\ldots,\propmodel_m$, used to define 
the remaining worlds in $\Wmc$, are all $\varphi$-consistent 
models. 
\end{proof}

\begin{restatable}{lemma}{Lemmaprop}\label{lem:proplemma}
A formula $\prop{\varphi}$ is satisfied in a $\varphi$-consistent $\Nn$ model
iff
there is   a $\varphi$-consistent valuation \valuation 
for $\prop{\varphi}$ such that 
\begin{enumerate}
\item if $\B_i\psi$ is in ${\sf sub}(\prop{\varphi})$ and
$\valuation(\B_i\psi)=0$, then $\neg \psi$ 
is satisfied in a $\varphi$-consistent $\Nn$ model; 
\item if $\B_i\psi_1$ and $\B_i\psi_2$ are in 
${\sf sub}(\prop{\varphi})$, $\valuation(\B_i\psi_1)=1$, and 
$\valuation(\B_i\psi_2)=0$, then $(\psi_1\wedge\neg\psi_2)\vee(\neg\psi_1\wedge\psi_2)$
is satisfied in a $\varphi$-consistent $\Nn$ model. 
\end{enumerate}
\end{restatable}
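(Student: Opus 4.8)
The plan is to follow the proof of Lemma~\ref{lem:proplemma2} for $\C^{n}$ almost verbatim, with two adjustments: the closure under intersection is replaced by the unit condition defining $\Nn$ models, and the single model-theoretic condition of the $\C^{n}$ case splits into the two conditions of this lemma, with condition~(1) absorbing the role of the unit and condition~(2) that of comparing truth sets.

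For the ($\Rightarrow$) direction, I would take a world $w$ with $\propmodel, w \models \prop{\varphi}$ in a $\varphi$-consistent $\Nn$ model $\propmodel = (\propdomain, \{\propneigh_i\}_{i \in I}, \propassign)$ and define $\valuation(\psi) = 1$ iff $\propmodel, w \models \psi$; this is $\varphi$-consistent since $\propmodel$ is. For~(1), $\valuation(\B_i\psi) = 0$ gives $\propassign(\psi) \notin \propneigh_i(w)$, and since $\Wmc \in \propneigh_i(w)$ by the unit condition we get $\propassign(\psi) \neq \Wmc$, so some world of $\propmodel$ satisfies $\neg\psi$. For~(2), $\valuation(\B_i\psi_1) = 1$ and $\valuation(\B_i\psi_2) = 0$ give $\propassign(\psi_1) \in \propneigh_i(w)$ and $\propassign(\psi_2) \notin \propneigh_i(w)$, hence $\propassign(\psi_1) \neq \propassign(\psi_2)$, and any world in their symmetric difference satisfies $(\psi_1 \wedge \neg\psi_2) \vee (\neg\psi_1 \wedge \psi_2)$.

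For the ($\Leftarrow$) direction I would reuse the construction of Lemma~\ref{lem:proplemma2}. Given a $\varphi$-consistent valuation $\valuation$ satisfying~(1) and~(2), collect the witnessing $\varphi$-consistent $\Nn$ models $\propmodel_1, \ldots, \propmodel_m$ with distinguished worlds $w_1, \ldots, w_m$, take their domains pairwise disjoint, add a fresh world $w$, and set $\propdomain = \bigcup_{1 \leq j \leq m} \propdomain_j \cup \{w\}$, defining $J$, $\Vmc_0$ and $\propassign$ exactly as before. For the neighbourhoods, for $u \in \propdomain_j$ I put $\alpha$ in $\propneigh_i(u)$ when $\propmodel_j, u \models \B_i\psi_\alpha$ and $\alpha = J(\psi_\alpha)$ for some $\B_i\psi_\alpha \in {\sf sub}(\prop{\varphi})$, and for $w$ I put $\alpha$ in $\propneigh_i(w)$ when $\valuation(\B_i\psi_\alpha) = 1$ and $\alpha = J(\psi_\alpha)$; then, instead of closing under intersection, I add $\Wmc$ to every $\propneigh_i(u)$, making $\propmodel$ an $\Nn$ model. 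The core is again to show that $\propassign$ and $J$ agree on ${\sf sub}(\prop{\varphi})$ by induction, and I would isolate two claims for the modal step. The first says: if $\beta \in \propneigh_i(u)$ with $u \in \propdomain_j$ and $\beta = J(\psi)$ for some $\B_i\psi \in {\sf sub}(\prop{\varphi})$, then $\propmodel_j, u \models \B_i\psi$ — either $\beta$ was inserted explicitly, so $J(\psi) = J(\psi_\alpha)$ forces the truth sets of $\psi$ and $\psi_\alpha$ to coincide on $\propdomain_j$, or $\beta = \Wmc$, in which case the truth set of $\psi$ equals $\propdomain_j$, the unit of $\propmodel_j$, and so lies in its neighbourhood. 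The second says: if $\beta \in \propneigh_i(w)$ and $\beta = J(\psi)$ for some $\B_i\psi \in {\sf sub}(\prop{\varphi})$, then $\valuation(\B_i\psi) = 1$ — if $\beta$ came from some $\psi_\alpha$ with $\valuation(\B_i\psi_\alpha) = 1$, then assuming $\valuation(\B_i\psi) = 0$ triggers~(2) and yields a world where $\psi$ and $\psi_\alpha$ disagree, contradicting $J(\psi) = J(\psi_\alpha)$; if $\beta = \Wmc$, then assuming $\valuation(\B_i\psi) = 0$ triggers~(1) and yields a world falsifying $\psi$, contradicting $J(\psi) = \Wmc$.

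With these claims, the modal induction step proceeds as in Lemma~\ref{lem:proplemma2}: the inclusion $J(\B_i\psi) \subseteq \propassign(\B_i\psi)$ holds because $J(\psi)$ is placed in the relevant neighbourhood by construction, while the reverse inclusion uses the two claims; finally $\valuation(\prop{\varphi}) = 1$ gives $w \in J(\prop{\varphi}) = \propassign(\prop{\varphi})$, so $\propmodel, w \models \prop{\varphi}$, and $\varphi$-consistency of $\propmodel$ follows from that of $\valuation$ and of the $\propmodel_j$. I expect the main obstacle to be precisely the bookkeeping around the added unit: since $\Wmc$ belongs to every neighbourhood irrespective of local truth values, one must guarantee it never spuriously validates a box, which is exactly what the unit branch of each claim — resting on conditions~(1) and~(2) — is designed to rule out.
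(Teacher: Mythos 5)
Your proposal is correct and follows essentially the same route as the paper: the same valuation extracted from a satisfying world for ($\Rightarrow$), and for ($\Leftarrow$) the same disjoint union of witness models plus a fresh world, with the unit added to every neighbourhood and the two claims (one for worlds $u\in\Wmc_j$, one for the fresh $w$) handling the $\beta=\Wmc$ branch via conditions (1) and (2) exactly as the paper does. Your argument for Point 1 of ($\Rightarrow$) — locating a witness for $\neg\psi$ directly inside the given $\varphi$-consistent model via $\propassign(\psi)\neq\Wmc$ — is in fact slightly tighter than the paper's wording, but it is the same idea.
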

\begin{proof}
We start with proving ($\Rightarrow$). 
Suppose that $\prop{\varphi}$ is satisfied in a world $w$ of a $\varphi$-consistent \Nn model 
 $\propmodel = (\propdomain, \{ \propneigh_{i} \}_{i \in I}, \propassign)$. That is, 
 $\propmodel, w\models \prop{\varphi}$. We define a $\varphi$-consistent valuation for 
 $\prop{\varphi}$
 by setting $\nu(\psi)=1$ if $\propmodel, w\models \psi$ and $\nu(\psi) = 0$
 if  $\propmodel, w\not\models \psi$. 
 It is easy to check that $\nu$ is indeed a 
 $\varphi$-consistent valuation (given that $\propmodel$ is a  
 $\varphi$-consistent \Nn model). 

\textit{Point 1.}
By definition,  $\B_i {\sf true}$ holds in $\Nn$ models. 
Since there is a valuation $\valuation$ such that 
$\valuation(\B_i\psi)=0$ we have that
$\psi$ cannot be true in all valuations (otherwise
$\psi\equiv {\sf true}$ would hold and  $\valuation$ 
would violate 
$\B_i {\sf true}$ in $\Nn$ models).
This means that 
 $\neg \psi$ is 
$\Nn$  satisfiable.

\textit{Point 2.}
Assume that 
 $\B_i\psi_1$ and $\B_i\psi_2$ are in ${\sf sub}(\prop{\varphi})$, 
 $\nu(\B_i\psi_1)=1$ and $\nu(\B_i\psi_2)=0$. Then $\propmodel,w\models \B_i\psi_1$
 and $\propmodel,w\not\models \B_i\psi_2$. Thus, by definition, 
 $\propassign(\psi_1)\in \propneigh_i(w)$ and 
  $\propassign(\psi_2)\not\in \propneigh_i(w)$.
So,
   $\propassign(\psi_1)\neq \propassign(\psi_2)$.
Then, 
there 
   is a world $u$ in the symmetrical difference of these sets 
   such that $\propmodel,u\models (\psi_1\wedge\neg\psi_2)\vee(\neg\psi_1\wedge\psi_2)$. 
   

The proof of the converse ($\Leftarrow$)
is as follows. 
 Suppose there is a $\varphi$-consistent valuation $\nu$ for $\prop{\varphi}$ such that 
 Point~1
and
Point~2
hold. 
That is, 
\begin{itemize}
\item if $\B_i\psi$ is in ${\sf sub}(\prop{\varphi})$ and
$\valuation(\B_i\psi)=0$ then 
there is a $\varphi$-consistent \Nn model $\propmodel_{\psi}=(\propdomain_{\psi},
\{ \propneigh_{{\psi}_{i}} \}_{i \in I},\propassign_{\psi})$
 and a world 
$w_{\psi}\in \propdomain_{\psi}$ such that 
$\propmodel_{\psi}, w_{\psi}\models \neg\psi$; and 
\item if $\B_i\psi_1$ and $\B_i\psi_2$ are in 
${\sf sub}(\prop{\varphi})$, $\valuation(\B_i\psi_1)=1$, and 
$\valuation(\B_i\psi_2)=0$, then 
there is a $\varphi$-consistent \Nn model $\propmodel_{\psi_1,\psi_2}=(\propdomain_{\psi_1,\psi_2},
\{ \propneigh_{{\psi_1,\psi_2}_{i}} \}_{i \in I},\propassign_{\psi_1,\psi_2})$
 and a world 
$w_{\psi_1,\psi_2}\in \propdomain_{\psi_1,\psi_2}$ such that 
$\propmodel_{\psi_1,\psi_2}, w_{\psi_1,\psi_2}\models (\psi_1\wedge\neg\psi_2)\vee(\neg\psi_1\wedge\psi_2)$. 
\end{itemize}
Let $\propmodel_1,\ldots,\propmodel_m$ be an enumeration of \Nn models 
$\propmodel_{\psi}$ and $\propmodel_{\psi_1,\psi_2}$, as above.  That is, we take one model $\propmodel_{\psi}$ and one model $\propmodel_{\psi_1,\psi_2}$ for each such
subformula $j=\psi$ and pair of subformulas $j=\psi_1,\psi_2$ where $\propmodel_j = (\propdomain_j, \{ \propneigh_{j_{i}} \}_{i \in I},\propassign_j)$, 
and let $w_1,\ldots,w_m$ be an enumeration of the worlds $w_{\psi}$ and $w_{\psi_1,\psi_2}$, 
with $w_j\in \propdomain_j$. Assume without loss of generality that $\propdomain_j\cap \propdomain_k=\emptyset$ 
for $j\neq k$. 
In the following, we define a $\varphi$-consistent \Nn model   $\propmodel = (\propdomain,\{ \propneigh_{i} \}_{i \in I}, \propassign)$ 
for $\prop{\varphi}$. 

Intuitively, we construct $\propmodel$ by taking the union of each 
$\propmodel_j$ with the addition of a new world $w$ that 
will satisfy $\prop{\varphi}$. 
We define $\propdomain$ as $\bigcup_{1\leq j\leq n}\propdomain_j\cup \{w\}$, 
where $w$ is fresh.
Before defining $\propneigh_{i}$ and $\propassign$, we define the function $J: {\sf sub}(\prop{\varphi})\rightarrow 2^{\Wmc}$
with $J(\psi)=\bigcup_{0\leq j \leq m} \Vmc_j(\psi)$ for all $\psi\in {\sf sub}(\prop{\varphi})$, where 
$\Vmc_0: {\sf sub}(\prop{\varphi})\rightarrow  2^{\{w\}}$ is the function
that assigns $\psi$ to $\{w\}$, if $\nu(\psi)=1$, 
and to $\emptyset$, otherwise ($\Vmc_j$, for $1\leq j\leq m$, is as above).
By construction, we have that $J(\neg \psi)=\propdomain\setminus J(\psi)$
and $J(\psi_1\wedge \psi_2)=J(\psi_1)\cap J(\psi_2)$. 
We define the assignment $\propassign$ as the function 
$\propassign: \NPr(\varphi)\rightarrow 2^{\Wmc}$ satisfying 
 $\propassign(p_\elaxiom)=J(p_\elaxiom)$ for all $p_\elaxiom\in \NPr(\varphi)$. 
 
It remains to define $\propneigh_i$,
for
$i \in I$.
For $u\in \propdomain_j$ we put  $\alpha \subseteq \Wmc$ in $\propneigh_i(u)$ 
precisely when $\alpha=\Wmc$, or, $\alpha = J(\psi_{\alpha})$ and
$\propmodel_j,u \models \B_i \psi_{\alpha}$,  
for some $\B_i\psi_{\alpha} \in {\sf sub}(\prop{\varphi})$.
The next two claims establish that $\propneigh_i$ is as expected. 
\begin{claim}
If $\beta \in \Nmc_i(u)$ and $\beta = J(\psi)$ for some $\B_i\psi\in{\sf sub}(\prop{\varphi})$,
then $\propmodel_j,u\models\B_i\psi$.
\end{claim}
\begin{proof}[Proof of Claim]
Indeed, since $\beta = J(\psi)\in \Nmc_i(u)$, 
we must have that either $\beta=\Wmc$ or $\propmodel_j,u\models \B_i\psi_{\beta}$ and  $\beta = J(\psi_{\beta})$ for 
some $\B_i\psi_\beta \in{\sf sub}(\prop{\varphi})$.
In the former case, as all $\propmodel_j$ models are \Nn models, 
we have that $\propmodel_j,u\models \B_i\psi$.
In the latter,  since
$J(\psi)=J(\psi_\beta)$, we also have $\propassign_j(\psi)=\propassign_j(\psi_\beta)$ 
(recall that $\propdomain_j \cap \propdomain_k = \emptyset$ for  
$k \neq j$), so $\propmodel_j,u\models \B_i\psi$ iff 
$\propmodel_j,u \models \B_i \psi_\beta$.
It follows that $\propmodel_j,u\models \B_i\psi$.
\end{proof}

Also, we put $\alpha \subseteq \Wmc$ in $\propneigh_i(w)$ (recall $w$ is the fresh 
world introduced above in $\propdomain$) precisely 
when $\alpha = \Wmc$ or $\nu(\B_i\psi_\alpha) = 1$ and $\alpha = J(\psi_\alpha)$ for some 
$\B_i\psi_\alpha \in {\sf sub}(\prop{\varphi})$.

\begin{claim}
If 
$\beta \in \propneigh_i(w)$ and $\beta = J(\psi)$ for some $\B_i \psi \in{\sf sub}(\prop{\varphi})$ 
then $\nu(\B_i \psi) = 1$.
\end{claim}
\begin{proof}[Proof of Claim]
Indeed, since $\beta = J(\psi)\in \propneigh_i(w)$ 
we must have that either $\beta=\Wmc$
or $\nu(\B_i\psi_\beta)=1$ and $\beta = J(\psi_\beta)$ for some 
$\B_i\psi_\beta \in{\sf sub}(\prop{\varphi})$. 
Suppose  that $\nu(\B_i\psi) = 0$ and $\beta\neq\Wmc$.
Then, by assumption, there exists a $\varphi$-consistent \Nn model
$\propmodel_j = (\propdomain_j, \{ \propneigh_{j_i} \}_{i \in I}, \propassign_j)$ and a world $w_j
\in \propdomain_j$ such that $\propmodel_j,w_j \models (\psi_\beta \wedge \neg\psi)\vee(\neg\psi_\beta \wedge \psi)$. 
It follows that $\propassign_j(\psi_\beta)\neq \propassign_j(\psi)$. 
Consequently $J(\psi_\beta)\neq J(\psi)$, which is a contradiction.  
Now, suppose   that $\nu(\B_i\psi) = 0$ and $\beta=\Wmc$.
By assumption,
there exists a $\varphi$-consistent \Nn model
$\propmodel_j = (\propdomain_j, \{ \propneigh_{j_i} \}_{i \in I}, \propassign_j)$ and a world $w_j
\in \propdomain_j$ such that $\propmodel_j,w_j \models \neg\psi$.
It follows that $\propassign_j(\psi)\neq \Wmc$. Consequently $\beta= J(\psi)\neq \Wmc$, which is a contradiction.  Then,  $\nu(\B_i \psi) = 1$, as required.
\end{proof}

We now show by induction on the structure of formulas 
that $\propassign$ and $J$ agree on ${\sf sub}(\prop{\varphi})$. 
This holds by construction for atomic propositions. It is easy to deal 
with propositional connectives, since we know that $J(\neg \psi)=\propdomain\setminus J(\neg \psi)$
and  $J(\psi_1\wedge \psi_2)=J(\psi_1)\cap J(\psi_2)$ 
and similarly for $\propassign$. Assume inductively that $\propassign(\psi) = J(\psi)$.
Suppose first that $u\in J(\B_i\psi)$. Then, either $u=w$ and $\nu(\B_i\psi)=1$
or $u\in \propdomain_j$ and $\propmodel_j,u\models\B_i\psi$. In either case 
we have that $J(\psi)\in \propneigh_i(u)$. Since 
$\propassign(\psi) = J(\psi)$, it follows that $\propmodel,u\models \B_i\psi$, 
that is, $u\in \propassign(\B_i\psi)$. Suppose now that $u\in \propassign(\B_i\psi)$, 
that is, $\propmodel,u\models \B_i\psi$, or, equivalently, $\propassign(\psi)\in \propneigh_i(u)$.
Since $\propassign(\psi) = J(\psi)$ it follows that either $u=w$ and 
$\nu(\B_i\psi)=1$ or $u\in \propdomain_j$ and $\propmodel_j,u\models \B_i\psi$. 
In either case we have that $u\in J(\B_i\psi)$. 

Since $\nu(\prop{\varphi})=1$, we have that $w\in J(\prop{\varphi})$, 
and consequently $w\in \propassign(\prop{\varphi})$. That 
is, $\propmodel,w\models\prop{\varphi}$. 
The fact that $\propmodel$ is $\varphi$-consistent follows from 
the fact that $\nu$, used to construct the assignment 
related to $w$, is $\varphi$-consistent and 
the models $\propmodel_1,\ldots,\propmodel_m$, used to define 
the remaining worlds in $\Wmc$, are all $\varphi$-consistent.
The fact that  $\propmodel$ contains the unit is by construction, that is,
we defined $\propmodel$
so that for all 
$i\in [1,n]$ and all $w\in\Wmc$, we have that $\Wmc\in\Nmc_i(w)$. 
Thus, $\propmodel$ is a $\varphi$-consistent \Nn model that satisfies 
$\prop{\varphi}$, as required. 
\end{proof}

To determine satisfiability of $\prop{\varphi}$ in a $\varphi$-consistent model, we use Lemma~\ref{lem:prop} and the characterizations above. 
To establish complexity results, 
we use the fact that there are only quadratically many   subformulas in $\prop{\varphi}$. 
Satisfiability in
\ALC is \ExpTime-complete  and so, one can determine in exponential time
whether a valuation is $\varphi$-consistent. For an \ExpTime~upper bound, one can
deterministically compute all possible $\varphi$-consistent valuations for 
$(\bigwedge^{k-1}_{j=1}\psi_j\wedge\neg\psi_k)$ (or $(\psi_1 \wedge \neg \psi_2)$) and
decide satisfiability of $\prop{\varphi}$ by a $\varphi$-consistent model using a bottom-up strategy (as in~\cite{Baader:2012:LOD:2287718.2287721}). Since satisfiability in \ALC is \ExpTime-hard, our upper bound is tight.

\begin{theorem}
The $\CALCg$ and $\NALCg$ formula satisfiability problems on constant domain neighbourhood models are \ExpTime-complete.
\end{theorem}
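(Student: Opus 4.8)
The plan is to prove membership in \ExpTime\ and \ExpTime-hardness separately, for both $\mathbf{L} = \mathbf{C}$ and $\mathbf{L} = \mathbf{N}$; together with the matching results for $\EALCg$ and $\MALCg$ from~\cite{DL19} this settles the complexity of all four logics in $\mathsf{Log}$. For the upper bound, I would chain the three reduction lemmas proved above. By Lemma~\ref{lem:prop}, deciding $\LnALCg$ satisfiability of $\varphi$ on constant domain neighbourhood models amounts to deciding whether the propositional abstraction $\prop{\varphi}$ is satisfied in a $\varphi$-consistent $\mathbf{L}^{n}$ model; by Lemma~\ref{lem:proplemma2} (for $\mathbf{L} = \mathbf{C}$) or Lemma~\ref{lem:proplemma} (for $\mathbf{L} = \mathbf{N}$), the latter is equivalent to the existence of a $\varphi$-consistent valuation $\nu$ for $\prop{\varphi}$ whose boxed subformulas satisfy the stated closure condition. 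The two facts that keep this within \ExpTime\ are: (i) ${\sf sub}(\prop{\varphi})$ has only quadratically many elements, so a valuation is a bit-vector of polynomial length, and testing whether it is $\varphi$-consistent reduces to a single \ALC satisfiability check on $\bigwedge_{p_\elaxiom\in\NPr(w)}\elaxiom \wedge \bigwedge_{p_\elaxiom\in\overline{\NPr(w)}}\neg\elaxiom$, decidable in \ExpTime; and (ii) the closure conditions only ever consult satisfiability, in a $\varphi$-consistent model, of auxiliary formulas of the shape $(\bigwedge^{k-1}_{j=1}\psi_j\wedge\neg\psi_k)$, $(\neg\psi_j\wedge\psi_k)$, $\neg\psi$, or $(\psi_1\wedge\neg\psi_2)\vee(\neg\psi_1\wedge\psi_2)$, of which there are only polynomially many.

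I would then organise the decision procedure as a bottom-up marking computation in the spirit of~\cite{Baader:2012:LOD:2287718.2287721}. Since each auxiliary formula above strips one box off a subformula of $\prop{\varphi}$, it has strictly smaller modal depth than the formula to which the closure condition is applied, which yields a well-founded recursion on modal depth. Concretely, I would compute, by increasing modal depth, the set of formulas built from ${\sf sub}(\prop{\varphi})$ that are satisfiable in a $\varphi$-consistent $\mathbf{L}^{n}$ model: at depth $0$ this is a single \ALC satisfiability check, and at each higher depth the characterisation lemma reduces the question for a formula to the existence of a suitable $\varphi$-consistent valuation whose side conditions only refer to auxiliary formulas already marked at lower depth. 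As there are at most exponentially many valuations to range over, polynomially many auxiliary formulas, and each $\varphi$-consistency test is an \ExpTime\ \ALC call, the overall procedure stays within \ExpTime.

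For the lower bound, I would reduce from concept satisfiability with respect to a general \ALC TBox, which is \ExpTime-hard. Given an \ALC concept $\Cmc_0$ and a TBox $\Tmc = \{\Cmc_j \sqsubseteq \Dmc_j\}_j$, I would take the modality-free $\MLnALCg$ formula $\varphi = \neg(\Cmc_0 \sqsubseteq \bot) \wedge \bigwedge_{j}(\top \sqsubseteq \neg\Cmc_j \sqcup \Dmc_j)$. Since $\varphi$ contains no modal operators, a constant domain neighbourhood model satisfies $\varphi$ at a world iff the \ALC interpretation at that world is a model of $\Tmc$ in which $\Cmc_0$ is non-empty, so $\varphi$ is $\LnALCg$ satisfiable iff $\Cmc_0$ is satisfiable with respect to $\Tmc$; hardness for both $\CALCg$ and $\NALCg$ follows.

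I expect the main obstacle to be the accounting in the bottom-up phase: I must verify that the recursion through the characterisation lemmas genuinely bottoms out, via the modal-depth decrease of the auxiliary formulas, and that iterating over valuations together with the \ExpTime\ \ALC oracle does not push the total cost beyond a single exponential. The delicate point is that the auxiliary formulas whose satisfiability is consulted are themselves modal formulas interpreted in $\varphi$-consistent models, not plain \ALC formulas, so the argument must be phrased as a recursion over the whole modal structure rather than one \ALC reduction, while still keeping only polynomially many distinct oracle queries per depth level.
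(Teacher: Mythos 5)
Your proposal follows essentially the same route as the paper: chain Lemma~\ref{lem:prop} with the valuation characterisations of Lemmas~\ref{lem:proplemma2} and~\ref{lem:proplemma}, decide the resulting conditions bottom-up in the style of~\cite{Baader:2012:LOD:2287718.2287721} using \ALC satisfiability as the \ExpTime{} oracle for $\varphi$-consistency, and inherit the lower bound from \ALC (the paper states all of this only as a short paragraph, so your explicit TBox reduction and the well-founded recursion on modal depth are welcome elaborations). The only point to tighten is the count for $\CALCg$: the conjunctions $\bigwedge_{j}\psi_j$ in Lemma~\ref{lem:proplemma2} range over sets of boxed subformulas, so across all valuations there are exponentially rather than polynomially many auxiliary formulas; this still keeps the memoised procedure within a single exponential, but the accounting is not quite as you stated it.
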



%


\section{Discussion and Future Work}

In this paper, we have presented first results on reasoning in non-normal modal description logics.
After providing motivations and preliminaries for these logics, we have focused on the following two aspects.
First, we have introduced terminating, sound and complete tableaux algorithms for checking satisfiability of multi-modal description logics formulas in varying domain neighbourhood models based on classes of frames that characterise different non-normal systems, that is, $\mathbf{E}^{n}$, $\mathbf{M}^{n}$, $\mathbf{C}^{n}$, and $\mathbf{N}^{n}$.
We have then studied the complexity of the satisfiability problem restricted to fragments where modal operators can be applied to formulas only (thus without modalised concepts) and interpreted on neighbourhood models with constant domains.
As future work, we plan to investigate along the following directions.

%
First, we are interested in adapting our tableau algorithms to check satisfiability of formulas on neighbourhood models with constant domains.
This
requires to address the
introduction of fresh variables
that do not occur in other previously expanded labelled constraints systems.
For instance, by applying the $\mathbf{M}^{n}_{\ALC}$-rules to the $n$-labelled constraint system $S_{n} = \{ n : \Diamond_{i} \exists r. A(x), \Box_{i} \lnot A(x) \}$, we obtain the $m$-labelled constraint system $S_{m} = \{ m : \exists r. A(x),  m : \lnot A(x), m : r(x,y), m : A(y) \}$.
The
fresh
variable $y$ in $S_{m}$ does not allow us to directly extract a model with constant domain,
since there would be no object in the domain of the world associated with $S_{n}$ capable of representing $y$ correctly.

A possible solution could be to define a suitable notion of \emph{quasimodel}~\cite{GabEtAl03}, to equivalently characterise satisfiability on constant domain neighbourhood models in terms of structures representing ``abstractions'' of the actual models of a formula.
The representation of domain objects across worlds would be given in terms of suitably defined functions, called \emph{runs}, to guarantee that they are well-behaved with respect to their modal properties, and that they do not violate the constant domain assumption.
A similar approach is followed by \citeauthor{SeyErd09}~\cite{SeyErd09} and~\citeauthor{SeyJam09}~\cite{SeyJam09,SeyJam10}, with suitable ``copies'' of worlds introduced to address the problem of the definition of runs.
In these works, however, it is not made explicit how such a definition should be carried out in detail.
We conjecture that an approach based on \emph{marked variables}, as illustrated in~\citeauthor{GabEtAl03}~\cite{GabEtAl03}, can be fruitfully adopted together with quasimodels to solve the issue of a constant domain model extraction from a complete and clash-free completion set for a formula.

In addition, we are interested in tight complexity results for $\LnALC$ formula satisfiability, with respect to varying and constant domain neighbourhood models.
It is known that $\ALC$ formula satisfiability is $\ExpTime$-complete. However, we do not know whether the upper bound for $\LnALC$ formula satisfiability problem on varying or constant domain neighbourhood models can be improved to $\ExpTime$-membership, for any $\mathbf{L} \in \{ \mathbf{E}, \mathbf{M}, \mathbf{C}, \mathbf{N} \}$.
It has to be noted that, at the propositional level, the formula satisfiability problem for the systems $\mathbf{E}$, $\mathbf{M}$, and $\mathbf{N}$ is known to be $\NP$-complete, with a rise to $\PSpace$-completeness for systems containing $\mathbf{C}$~\cite{Var2}.

Finally, we plan to consider satisfiability in other combinations and extensions of non-normal modal description logics. This would naturally lead us to consider both the straightforward cases of $\mathbf{MC}$, $\mathbf{MN}$ and $\mathbf{CN}$ of the classical cube~\cite{LelPim19}, as well as other logics tailored to applications in knowledge representation contexts. In particular, we intend to investigate non-normal modal description logics in epistemic, coalitional, and deontic 
settings.


\bibliography{bib_nnmdl}

\newpage
\appendix

\end{document}